\def\sig{\varsigma}
\def\tr{{\operatorname{tr}}}
\def\re{{\operatorname{re}}}
\def\im{{\operatorname{im}}}
\def\mtx{{\operatorname{M}}}
\def\QM{{\operatorname{QM}}}
\def\ux{{\underline x}}
\def\uX{{\underline X}}
\def\state{\mathscr{S}}
\def\cS{\mathcal{S}}
\def\C{\mathbb{C}}
\def\R{\mathbb{R}}
\def\N{\mathbb{N}}
\def\cB{\mathcal{B}}
\def\cD{\mathcal{D}}
\def\cH{\mathcal{H}}
\newcommand{\Langle}{\mathop{<}\!}
\newcommand{\Rangle}{\!\mathop{>}}
\def\mx{\Langle \ux\Rangle}
\def\px{\C\Langle \ux\Rangle}
\def\ncstate{\state\!\mx}
\def\sa{{\rm sa}}
\def\fin{{\rm fin}}
\DeclareMathOperator{\wt}{wt}
\newcommand{\ket}[1]{\mathinner{|#1\rangle}}
\newcommand{\dyad}[1]{| #1\rangle \langle #1|}
\newcommand{\ot}[0]{\otimes}
\newcommand{\one}[0]{I}
\newcommand{\av}[1]{{\langle#1\rangle}}
\newcommand{\ax}[1]{{\langle x\rangle}}
\newcommand{\ay}[1]{{\langle y\rangle}}
\newcommand{\SEP}[0]{\mathrm{SEP}}
\newcommand{\WW}[0]{\mathcal{W}}
\begin{document}

\title*{Positivity of state, trace, and moment polynomials, and applications in quantum information}
\titlerunning{State, trace, and moment polynomials}
\author{Felix Huber\orcidID{0000-0002-3856-4018} and\\ Victor Magron\orcidID{0000-0003-1147-3738} and \\Jurij Vol\v{c}i\v{c}\orcidID{0000-0002-7848-3458}}
\institute{Felix Huber 
\at 
Institute of Theoretical Physics and Astrophysics, Faculty of Mathematics, Physics and Informatics, University of Gdańsk,
Wita Stwosza 57, 80-308 Gdańsk, Poland, \email{felix.huber@ug.edu.pl}
\and Victor Magron \at LAAS CNRS, 7 avenue du Colonel Roche,
F-31031 Toulouse, France, \email{vmagron@laas.fr}
\and Jurij Vol\v{c}i\v{c} \at Department of Mathematics,  
Drexel University, Pennsylvania, \email{jurij.volcic@drexel.edu}
}
%
%
\maketitle

\keywords{Noncommutative polynomial optimization, state polynomial, trace polynomial, moment polynomial, Positivstellensatz, hierarchy of semidefinite programs, 
nonlinear Bell inequalities, 
entanglement detection, 
uncertainty relations, 
quantum codes}

\abstract{
state, trace, and moment polynomials are polynomial expressions in several operator or random variables and positive functionals on their products (states, traces or expectations).
While these concepts, and in particular their positivity and optimization, arose from problems in quantum information theory, yet they naturally fit under the umbrella of multivariate operator theory.
This survey presents state, trace, and moment polynomials in a concise and unified way, and highlights their similarities and differences. The focal point is their positivity and optimization. Sums of squares certificates for unconstrained and constrained positivity (Positivstellens\"atze) are given, and parallels with their commutative and freely noncommutative analogs are discussed. They are used to design a convergent hierarchy of semidefinite programs for optimization of state, trace, and moment polynomials. 
Finally, circling back to the original motivation behind the derived theory, multiple applications in quantum information theory are outlined.
}

\section{Introduction}
\label{sec:intro}
Polynomial optimization problems appear in many fields of science and engineering. 
They consist of minimizing a given polynomial $f$ in several commutative variables $\ux= (x_1,\dots,x_n)$ on a set defined by finitely many polynomial (in)equalities. 
Since \emph{exact} computation of the minimum of $f$ turns out to be NP-hard \cite{laurent2009sums}, an intense amount of research rather focused on computing \emph{approximations}. 
This optimization problem is strongly related to the problem of representing nonnegative polynomials with \emph{sums of squares} (SOS) certificates, these latter objects being at the cornerstone of real algebraic geometry \cite{marshall2008positive}. 
Artin's solution of Hilbert's 17th problem allows one to represent a globally nonnegative polynomial $f$ as an SOS of rational functions, while in the constrained case, Schm\"udgen's Positivstellensatz \cite{schm93} and Putinar's Positivstellensatz \cite{putinar1993positive} certify the positivity of $f$ with weighted SOS polynomials. 
The duality between polynomials positive on a set defined by polynomial (in)equalities and probability measures supported on this set connects SOS-based representations with necessary conditions to solve the moment problem on this set; see the monographs \cite{lasserre2009moments,schmbook} dedicated to this topic. 
About two decades ago, Lasserre provided in \cite{lasserre2001global} a general hierarchical framework to approximate the minimum of $f$ from below. 
Under mild assumptions, slightly stronger than compactness of the feasible set, this latter hierarchy converges to the minimum of $f$ as a consequence of Putinar's Positivstellensatz \cite{putinar1993positive}. 
At each step of the hierarchy, a lower bound $\alpha$ of the minimum is obtained by means of semidefinite programming (SDP) \cite{vandenberghe1996semidefinite}, a special class of  convex optimization problems consisting of minimizing a linear function under linear matrix inequality constraints. 
In addition to $\alpha$, the SDP program returns a weighted SOS decomposition of $f - \alpha$, each weight corresponding to a polynomial involved in the set of constraints. Overall this weighted SOS decomposition certifies the nonnegativity of $f -\alpha$ on the feasible set. 
Similar techniques can analyze dynamical systems involving polynomial input data. 
We refer the interested reader to the recent monograph  \cite{henrion2020moment} focusing on control and analysis of such dynamical systems. 

The above concepts of positivity and SOS decompositions can be extended to the noncommutative setting by replacing the commutative variables $\ux = (x_1,\dots,x_n)$ by freely noncommutative variables. 
Multivariate polynomial inequalities in such matrix or operator variables appear in many scientific fields, including quantum statistical mechanics \cite{carlen2010trace, sutter2017multivariate}, 
control systems \cite{sig,frag}, quantum information theory \cite{beigi2013sandwiched, pozas2019bounding}, operator algebras \cite{netzer2010tracial,fritz2014can} and free probability \cite{guionnet2009free,junge2013}.
The associated problems are typically dimension-independent: that is, one is interested in validity of inequalities for matrices of all sizes, or one wishes to optimize an objective function over bounded operators on Hilbert spaces of arbitrary dimension.
This dimension-free aspect impedes direct applicability of results from real algebraic geometry to these multivariate operator inequalities. However, the fundamental principles nevertheless transfer into the noncommutative setting, sometimes with even stronger implications than in the commutative setting. A seminal example is the resolution of the noncommutative analog of Hilbert's 17th problem, established independently by Helton \cite{hel} and McCullough \cite{mcc}: noncommutative polynomials that are positive semidefinite on all matrix tuples are sums of squares of noncommutative polynomials.
This breakthrough naturally led to new possibilities in noncommutative polynomial optimization.
Under an analogous assumption as in the commutative setting, converging SDP-based hierarchies allow one to approximate as closely as desired the minimal eigenvalue or trace of a noncommutative polynomial subject to finitely many polynomial inequality constraints, as a consequence of Helton-McCullough Positivstellensatz \cite{helton2004positivstellensatz}. 
The dual approach via the noncommutative moment problem was developed independently, motivated by quantum information theory.
The groundbreaking Navascu\'es-Pironio-Ac\'in hierarchy   \cite{npa08} allows one to readily compute upper bounds over the maximal violation levels of linear Bell inequalities (see also \cite{doherty2008quantum}), while \cite{pna10} generalizes this dual SDP-hierarchy to general noncommutative polynomial optimization problems. 
Furthermore, while the above developments concern eigenvalue optimization of noncommutative polynomials, similar advances on trace optimization of noncommutative polynomials were also achieved \cite{burgdorf13}. Its applications include
bounds on entanglement dimensions \cite{grib18},
matrix factorization ranks \cite{grib19}, and perfect strategies for synchronous quantum non-local games \cite{benewatts,brannan}.
The related algorithmic developments have been implemented in various libraries, such as 
\texttt{Ncpol2sdpa} \cite{wittek2015algorithm}, 
\texttt{NCSOStools} \cite{cafuta2011ncsostools}, 
\texttt{NCalgebra} \cite{helton1996ncalgebra}, 
\texttt{TSSOS} \cite[Appendix~B]{magron2023sparse},    available in Python, Matlab, Mathematica and Julia, respectively. 
We refer the interested reader to \cite{bhardwaj2022noncommutative} for a short introduction on noncommutative polynomial optimization and to \cite{burgdorf2016optimization} for a more comprehensive treatise. 

Recent advances in quantum information theory have motivated  new developments in the field of operator theory to  
characterize positivity and optimize wider classes of polynomial functions on operator tuples: \emph{state}, \emph{trace}, and \emph{moment} polynomials. 
These objects are polynomials in operator variables and either states, traces, or expectations of their products, respectively. 
This survey summarizes the related research efforts and outlines applications in entanglement detection, quantum network correlations, quantum uncertainties, and quantum codes.

\section{Noncommutative and state polynomials}\label{sec:state}

Throughout the text let $\ux=(x_1,\dots,x_n)$ be freely noncommuting variables. Let $\mx$ be the set of all words in $\ux$ (the free monoid generated by $\ux$), with the empty word 1. The free $*$-algebra $\px$ of \emph{noncommutative polynomials} in $\ux$ with coefficients $\C$ is endowed with a unique $\C$-skew linear involution $*$ that makes $x_j$ hermitian, $x_j^*=x_j$.\footnote{
Likewise, one can set up framework for non-hermitian variables $z_j$, and consider the free $*$-algebra $\C\langle z_1,\dots,z_n,z_1^*,\dots,z_n^*\rangle$ generated by them and their formal adjoints. All the ensuing statements correspond to analogs for $\C\langle x_1,\dots,x_{2n}\rangle$ generated by $2n$ hermitian variables. Thus, we restrict to hermitian variables for simplicity.
} 
To the monoid $\mx$ we also assign a commutative algebra as follows. With each $1\neq w\in\mx$ we associate a commutative indeterminate $\sig(w)$, and let 
$$\state=\C\big[\sig(w)\colon w\in\mx\setminus\{1\}\big].$$
Then $\state$ is a polynomial $*$-algebra on countably many generators, with a $\C$-skew linear involution determined by $\sig(w)^*=\sig(w^*)$. Elements of $\state$ are called \emph{state polynomials}. Combining commutative and noncommutative aspects, the $*$-algebra of \emph{noncommutative state polynomials} is $\ncstate=\state\otimes_{\C} \px$. The $\sig$ notation extends to a natural $\state$-linear unital map $\sig:\ncstate\to\state$.
The real subspaces of self-adjoint elements $f=f^*$ in $\px,\,\state,\,\ncstate$ are denoted 
$\px^\sa,\,\state^\sa,\,\ncstate^\sa$, and we write $\re (g)=\frac12(g+g^*)$, $\im (g)=\frac{1}{2i}(g-g^*)$.

Noncommutative state polynomials are therefore formal polynomial expressions involving the noncommuting variables and $\sig$ symbols in their words. In this survey, we wish to furthermore view them as multivariate functions in operator variables and states. To do this, we first recall some standard notions from functional analysis \cite{Reed80,Tak02}.
Let $\cH$ be a complex Hilbert space. Let $\cB(\cH)$ denote the $*$-algebra of bounded operators on $\cH$. When $\cH=\C^k$ we identify $\cB(\cH)$ with $k\times k$ matrices $M_k(\C)$.
If $A\in\cB(\cH)$ then $A\succeq0$ denotes that $A$ is a positive semidefinite operator.
A \emph{state} on $\cH$ is a positive unital linear functional $\lambda:\cB(\cH)\to\C$. Every unit vector $u\in\cH$ determines a state $\lambda(X)=\langle Xu,u\rangle$, 
which is called a \emph{vector} state. 
More generally, if $\rho\in\cB(\cH)$ is a density operator (a positive semidefinite trace-class operator with trace 1), then $\lambda(X)=\tr(\rho X)$ is a state.
Let $\cS(\cH)$ denote the set of all states on $\cB(\cH)$.

Let $f\in\ncstate$ be a noncommutative state polynomial, $\uX=(X_1,\dots,X_n)$ a tuple of self-adjoint bounded operators on a Hilbert space $\cH$, and $\lambda:\cB(\cH)\to\C$ a state. The \emph{state evaluation} of $f$ at $(\lambda,\uX)$ is $f(\lambda,\uX)\in\cB(\cH)$ defined in a natural way, by replacing $x_j$ and $\sig$ in $f$ with $X_j$ and $\lambda$, respectively. In the same way we define state evaluations for state polynomials; if $f\in\state$ then $f(\lambda,\uX)\in\C$. The involution on noncommutative state polynomials is compatible with the usual involution on $\cB(\cH)$; in particular, if $f\in\ncstate^\sa$ (resp. $f\in\state^\sa$) then $f(\lambda,\uX)$ is self-adjoint (resp. real) for all $\lambda$ and $\uX$ as above.

One can generalize this formalism by adding several independent state symbols $\sig_1,\sig_2,\dots$ and consider evaluations on tuples of operators and several states, and then suitably extend the subsequent results presented in this survey; however, for the sake of simplicity we persist in working with a single formal state symbol $\sig$.

\subsection{State semialgebraic sets}

A central topic of this paper is positivity of state polynomials. More precisely, we are interested in certifying whether a state polynomial is nonnegative under all state evaluations at operators and states that satisfy given noncommutative state constraints. To formalize this, let $C\subset\ncstate^\sa$ be a set of constraints.
Let $\cH$ be a separable infinite-dimensional Hilbert space; note that up to isomorphism, there is only one such Hilbert space, so one may fix $\cH=\ell^2(\N)$.
To $C$ we assign two closely related \emph{state semialgebraic sets}:
\begin{align*}
\cD^\infty(C)&= \big\{(\lambda,\uX)\in\cS(\cH)\times \cB(\cH)^n\colon 
X_j^*=X_j, \ c(\lambda,\uX)\succeq0 \text{ for all } c\in C\big\},\\
\cD^\fin(C)&= \bigcup_{k\in\N}\big\{(\lambda,\uX)\in\cS(\C^k)\times \mtx_k(\C)^n\colon X_j^*=X_j, \ c(\lambda,\uX)\succeq0 \text{ for all } c\in C\big\}.
\end{align*}
The terminology originates from real algebraic geometry \cite{Mar}, where semialgebraic sets are solution sets of polynomial inequalities.
Let us briefly compare the two variants of state semialgebraic sets. As mentioned above, $\cD^\infty(C)$ is essentially independent of the concrete realization of a separable infinite-dimensional Hilbert space $\cH$. Also, one can view 
$\cD^\fin(C)$ as a subset of $\cD^\infty(C)$ (by embedding the matrix algebras $M_k(\C)$ into $\cB(\cH)$ as unital $*$-subalgebras). 
However, $\cD^\fin(C)$ may be much smaller than $\cD^\infty(C)$ (from the perspective of state polynomial positivity), as the following example shows.

\begin{example}\label{e:findim}
Write $z=x_1+ix_2$ and let $C=\{z^*z-1\}$.
Note that $(X_1,X_2)\in \cD^\fin(C)$ precisely when the singular values of $Z=X_1+i X_2$ lie in $[1,\infty)$. Let $f=\sig(zz^*)-1$. Then $f\ge0$ on $\cD^\fin(C)$ because $ZZ^*-I\succeq 0$ for any square matrix $Z$ satisfying $Z^*Z\succeq I$. On the other hand, $f\not\ge0$ on $\cD^\infty(C)$. Indeed, let $R\in\cB(\ell^2(\N))$ be the right shift, and $X_1=\re(R)$, $X_2=\im(R)$. Note that $R^*R=I$.
Let $\lambda$ be the vector state given by the vector $u=(1,0,0,\dots)\in\ell^2(\N)$; notice that $R^*u=0$.
Then $(\lambda,\uX)\in \cD^\infty(C)$ because $R^*R=I$, and
$$f(\lambda,\uX)
=\langle RR^*u,u\rangle-1=-1.$$
\end{example}

Example \ref{e:findim} shows that state polynomial positivity on all finite-dimensional Hilbert spaces is in general strictly weaker than state polynomial positivity on infinite dimensional Hilbert spaces.
There are natural conditions under which there is no gap between finite-dimensional and infinite-dimensional picture, e.g., when $\cD^\fin(C)$ is matrix convex \cite{cvxpos11}, which is particularly of interest in control theory. 
On the other hand, in quantum information theory, one of the most important sources for noncommutative optimization applications, no-communication imposes commutation relations on pairs of operator variables; in this case, the gap between $\cD^\fin(C)$ and $\cD^\infty(C)$ exists (by the negative answer to Connes' embedding problem \cite{connes}) and carries fundamental implications for the quantum theory.

From the perspective of state polynomial positivity, one can restrict $\cD^\infty(C)$ and $\cD^\fin(C)$ to only vector states, and all the results in this survey stay unchanged. This is due to the Gelfand-Naimark-Segal (GNS) construction, which allows one to replace general states with vector states (cf. purification of mixed quantum states in quantum physics).

As defined above, state semialgebraic sets are solution sets of noncommutative state polynomial inequalities. This framework in particular encloses equality constraints. For example, if one wants to restrict operators $\uX$ satisfying a noncommutative polynomial equation $p(\uX)=0$, then one can achieve that by considering the constraint set containing $\pm \re(p),\pm \im(p)$.

\subsection{State quadratic modules}\label{sec:qm}

Let $C\subset\ncstate^\sa$ be a constraint set. Next we look at state polynomials which are nonnegative on $\cD^\infty(C)$ ``for obvious reasons''.
The \emph{state quadratic module} $\QM(C)$ generated by $C$ is the subset of $\state$ consisting of elements of the form
\begin{equation}\label{e:qm}
\sum_j\sig(f_j^*c_jf_j),\qquad f_j\in \ncstate,\ c_j\in\{1\}\cup C.
\end{equation}
It is straightforward to see that elements \eqref{e:qm} are nonnegative on $\cD^\infty(C)$, and so $q\ge0$ on $\cD^\infty(C)$ for all $q\in\QM(C)$. Section \ref{sec:pos} below investigates weak converses of this observation.

From the perspective of real algebraic geometry \cite{Mar}, $\QM(C)$ is a quadratic module in $\state$, in the sense that it is closed under addition and multiplication by squares of elements from $\state$. 
Following this terminology, $\QM(C)$ is an archimedean quadratic module in $\state$ if for every $f\in\state^{\sa}$ there exists $N\in\N$ such that $N-f,N+f\in\QM(C)$.
Archimedeanity is an important property in real algebraic geometry and operator algebras, and plays a crucial role in various commutative and noncommutative Positivstellens\"atze on bounded domains \cite{Put,HM}. In the state polynomial framework, we will say that the set of constraints $C$ is \emph{archimedean} if there exists $N>0$ such that
\begin{equation}\label{e:arch}
N-x_1^2-\cdots-x_n^2 = \sum_j f_j^*c_jf_j \quad \text{ for some }f_j\in\ncstate,\ c_j\in \{1\}\cup C.
\end{equation}
If $C$ is archimedean, then $\QM(C)$ is archimedean as a quadratic module in $\state$.
Furthermore, note that if $C$ is archimedean, then $\cD^\infty(C)$ is bounded. Indeed, if $N>0$ is such that \eqref{e:arch} holds, then $\|X_j\|\le \sqrt{N}$ for all $\uX\in \cD^\infty(C)$ and $j=1,\dots,n$. Conversely, if $\cD^\infty(C)$ is bounded in the sense that $\|X_j\|\le N$ for all $\uX\in\cD^\infty(C)$, one can simply add $nN^2-\sum_j x_j^2$ to $C$ to obtain an archimedean constraint set.

Archimedeanity is a mild condition in applications of state polynomial positivity. Many optimization problems pertaining to state polynomials are concerned with bounded feasible domains.
For example, state polynomial optimization problems in quantum physics are typically interested in operator variables that are projections, and thus include constraints of the form $\pm(x_j-x_j^2)$ for $j=1,\dots,n$. Such constraint sets are archimedean because
$$n-\sum_{j=1}^n x_j^2=\sum_{j=1}^n(1-x_j)^2+\sum_{j=1}^n 2(1-x_j^2),$$
which has the desired form \eqref{e:arch}.

\section{Trace and moment polynomials}\label{ss:trmom}

There are two other constructions closely related to state polynomials, namely trace polynomials \cite{ksv17,huber21,tropt20} and moment polynomials \cite{blekherman2022,mompop}.

Trace polynomials originate in invariant theory, and their operator evaluations emerge in free probability and von Neumann algebras.
As the name indicates, trace polynomials are polynomial expressions of traces of products of variables, subject to the fundamental property of the trace $\tr(ab)=\tr(ba)$. There are two natural ways of evaluating such trace symbols: either as a usual trace (e.g. on matrices, or trace-class operators) or as a normalized trace $\tr I=1$ (e.g. rescaled trace on matrices, or tracial states on C*-algebras). The first option is well-understood when one is interested in evaluations on matrices of fixed dimension. On the other hand, the second option can be viewed as a special case of state evaluations where one restricts to tracial states. Namely, one considers state polynomials on state semialgebraic sets $\cD^\infty(C)$ where $C$ contains
$$\pm \re(\sig(uv)-\sig(vu)),\quad \pm \im(\sig(uv)-\sig(vu))\qquad 
\text{for all }u,v\in\mx.$$

It is important to stress that this dimension-independent framework applies only to evaluations involving normalized trace. Positivity of trace evaluations on matrices of all sizes using the (usual) non-normalized trace behaves quite differently. For example, deciding whether a univariate polynomial is nonnegative on all matrix evaluations with the normalized trace is a semialgebraic problem, and in particular such trace polynomials admit SOS representations with denominators \cite{univar20}; in contrast, deciding whether a univariate polynomial is nonnegative on all matrix evaluations with the non-normalized trace cannot reduce to a semialgebraic problem, leading to certain undecidability phenomena \cite{vandermonde}.

The other relatives of state polynomials are moment polynomials, which naturally arise in probability, statistics, and moment problems. Moment polynomials are polynomial expressions in mixed moments of random variables on a probability space. By the spectral theorem (see e.g. \cite{schmUnbded}), states on polynomials in $n$ commuting bounded self-adjoint operators correspond to integrals of polynomials in random variables with respect to Borel probability measures. Thus moment polynomials may be viewed as state polynomials in commuting variables. The commutativity condition can be encoded into constraints, so moment polynomials are realized as state polynomials on state semialgebraic sets $\cD^\infty(C)$ where $C$ contains
$$\pm \re(x_ix_j-x_jx_i),\quad \pm \im(x_ix_j-x_jx_i)\qquad 
\text{for all }1\le i<j\le n.$$

Due to the above relationships, some results for state polynomial positivity on state semialgebraic with reasonably general sets of constraints (e.g. Theorem \ref{t:arch} below) carry direct implications for trace and moment polynomials. 
In practice, one usually implements the additional relations directly into the framework by appropriating the defining features of the state symbol ($\sig(uv)=\sig(vu)$ for trace polynomials, $x_ix_j=x_jx_i$ for moment polynomials).
While problems on trace or moment polynomial positivity can be reduced to problems on state polynomial positivity as explained above, this perspective is sometimes too reductive. 
Namely, tracial and commutative constraints are rather special in a certain sense. In order to obtain compelling results about trace and moment polynomial positivity, it is important to leverage deeper specific results associated with these constraints. 
Concretely, one can often apply either invariant and representation theory of the symmetric group \cite{pro,huber21, huber2023refutingspectralcompatibilityquantum} and commutative algebra and moment problems \cite{schmbook,blekherman2022} to obtain stronger statements for trace and moment polynomials, respectively.

\subsection*{A comparative example}\label{exa:sttrmom}

The following toy example (in two operator variables $x_1=x$ and $x_2=y$) illustrates the distinction between state, trace, and moment polynomials (for a more elaborate example arising from a nonlinear Bell inequality, see \cite[Example 7.2.3]{stateopt23}). 
It also demonstrates that establishing dimension-independent operator inequalities may be intricate even in rather simple-looking instances (below we demonstrate that for the trace polynomial framework).

\noindent {\bf State regime}.
We look at the values of the state polynomial $$f=\frac12\Big(\sig(xyxy)+\sig(yxyx)\Big)-\sig(xyx)\sig(y)$$
subject to constraints $x^2=x$ and $y^2=y$ (i.e., we are interested in its state evaluations on pairs of projections).
Notice that $f(\lambda,(X,Y))=-\frac{1}{16}$ for
$$X=\begin{pmatrix}
1&0\\0&0
\end{pmatrix},\quad
Y=\frac12\begin{pmatrix}
1&1\\1&1
\end{pmatrix},\qquad
\lambda(\cdot)=\langle \,\cdot\, v,v\rangle
\quad\text{for}\quad
v=\frac12\begin{pmatrix}
\sqrt{2-\sqrt{2}} \\
-\sqrt{2+\sqrt{2}}
\end{pmatrix}.
$$

\noindent {\bf Trace regime.} In contrast, as a trace polynomial (i.e., a state polynomial restricted to tracial states), the minimum of $f$ is $-\frac{1}{27}$. Indeed, for every pair of projections $X,Y$ and a tracial state $\tau$ one has 
$$f(\tau,(X,Y))=\tau(XYXY)-\tau(XY)\tau(Y).$$
It is well-known (see e.g. \cite{pedersen})
that, up to unitary change of basis, every irreducible pair of projections is one of the following:
$$(0,0),\quad (1,0),\quad (0,1),\quad (1,1),\quad \Bigg(\begin{pmatrix}1&0\\0&0\end{pmatrix}, 
\begin{pmatrix}t&\sqrt{t-t^2}\\ \sqrt{t-t^2}& 1-t\end{pmatrix}\Bigg) \text{ for }0<t<1.$$
Let us denote these pairs as $(X_1,Y_1),\dots,(X_4,Y_4),(X_5(t),Y_5(t))$.
Thus, there exists a probability measure $\pi$ on $\{0,1\}^2\cup ]0,1{\mkern-2mu}[$ such that $(X,Y)$ is unitarily equivalent to the direct integral of the above irreducible representations with respect to $\pi$, while $\tau$ is the direct integral of the corresponding normalized traces (see \cite[Section IV.8]{Tak02} for a comprehensive study of direct integrals). 
We can decompose $\pi$ as
$$\pi=\alpha_1\delta_{(0,0)}+\alpha_2\delta_{(1,0)}
+\alpha_3\delta_{(0,1)}+\alpha_4\delta_{(1,1)}+\alpha_5\nu,$$
where $\delta_{(i,j)}$ is the Dirac delta measure at $(i,j)$, $\nu$ is a probability measure on the interval $]0,1{\mkern-2mu}[$, and $\alpha_j\ge0$, $\sum_j\alpha_j=1$. Then
$$
\tau\big(p(X,Y)\big)
=\sum_{j=1}^4\alpha_j p(X_j,Y_j)
+\alpha_5\int \frac12\tr p\big(X_5(t),Y_5(t)\big)\,{\rm d} \nu(t),
$$
for every bivariate noncommutative polynomial $p$, where $\frac12\tr$ is the normalized trace on $2\times 2$ matrices. In particular,
\begin{align*}
f(\tau,(X,Y))
&=\left(\alpha_4+\alpha_5\int \frac{t^2}{2}\,{\rm d} \nu \right)
-\left(\alpha_4+\alpha_5\int \frac{t}{2}\,{\rm d} \nu \right)\left(\alpha_3+\alpha_4+\alpha_5\int \frac12\,{\rm d} \nu \right)\\
&=\alpha_4\left(1-\left(\alpha_3+\alpha_4+\frac{\alpha_5}{2}\right)\right)+\frac{\alpha_5}{2}
\int \left(t^2-\left(\alpha_3+\alpha_4+\frac{\alpha_5}{2}\right)t\right)\,{\rm d} \nu\\
&\ge \alpha_4\left(1-\left(\alpha_3+\alpha_4+\frac{\alpha_5}{2}\right)\right)-\frac{\alpha_5}{8}\left(\alpha_3+\alpha_4+\frac{\alpha_5}{2}\right)^2\\
&=:p(\alpha_3,\alpha_4,\alpha_5),
\end{align*}
which is at least $-\frac{1}{27}$ by standard constrained optimization arguments (namely, by evaluating $p$ at its stationary points in the simplex 
$\{\alpha_3,\alpha_4,\alpha_5\ge0,\,
\alpha_3+\alpha_4+\alpha_5\le1\}$, and at stationary points of $p$'s restriction to the boundary).
The minimum $f(\tau,(X,Y))=-\frac{1}{27}$ is attained at
$$X=\begin{pmatrix}
1&0&0\\0&0&0\\0&0&0
\end{pmatrix},\quad
Y=\frac13\begin{pmatrix}
1&\sqrt{2}&0\\\sqrt{2}&2&0\\0&0&3
\end{pmatrix},\qquad
\tau \text{ the normalized trace on }3\times3\text{ matrices.}
$$

\noindent {\bf Moment regime}.
Finally, as a moment polynomial (i.e., a state polynomial restricted to commuting operators), $f$ can be interpreted as a polynomial expression in mixed moments with respect to a probability measure $\mu$ of two commuting random variables $X,Y$ valued in $\{0,1\}$. 
In this case,
\begin{align*}
f(\mu,(X,Y))
&=\frac12\left(\int(XY)^2\,{\rm d}\mu+\int(YX)^2\,{\rm d}\mu\right)-\int XYX\,{\rm d}\mu\int Y\,{\rm d}\mu \\
&=\int XY\,{\rm d}\mu-\int XY\,{\rm d}\mu\int Y\,{\rm d}\mu\\
&=\int XY\,{\rm d}\mu\int (1-Y)\,{\rm d}\mu\ge0.
\end{align*}
An SOS approach (assisted with semidefinite optimization) for establishing lower bounds on state and tracial evaluations of $f$ is outlined in Example \ref{exa:sttrmom2} below.

\section{Two Positivstellens\"atze}\label{sec:pos}

This section presents two main SOS certificates for positive state polynomials. They are essentially disjoint in scope, assumptions, implications and proof techniques.

\subsection{Hilbert-Artin theorem for state polynomials}

First we consider state polynomial positivity in the completely unconstrained case.
The following theorem can be seen as the resolution of a state polynomial variant of Hilbert's 17th problem.

\begin{theorem}[\cite{stateopt23}]\label{t:h17}
The following are equivalent for $f\in\state^{\sa}$:
\begin{enumerate}[(i)]
\item $f\ge0$ on $\cD^\fin(\emptyset)$;
\item $f\ge0$ on $\cD^\infty(\emptyset)$;
\item $f$ is a quotient of sums of products of elements of the form $\sig(h^*h)$ for $h\in\ncstate$.
\end{enumerate}
\end{theorem}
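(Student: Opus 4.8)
Let me think about how to prove this state-polynomial version of Hilbert's 17th problem.

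First, the easy implications: (iii) $\Rightarrow$ (ii) is immediate since $\sigma(h^*h)$ evaluates to $\lambda(H^*H) \geq 0$ for any state $\lambda$ (using positivity of $\lambda$), so products and sums of such are nonnegative, and a quotient of two nonnegative things is nonnegative where defined. And (ii) $\Rightarrow$ (i) is just restriction since $\mathcal{D}^{\rm fin}(\emptyset) \subseteq \mathcal{D}^\infty(\emptyset)$ (via the embedding mentioned in the text). So the content is (i) $\Rightarrow$ (iii), i.e., positivity on all finite-dimensional evaluations forces the SOS-of-products-with-denominators structure.

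The natural strategy is to pass to a commutative picture. A state polynomial $f \in \mathscr{S}^{\rm sa}$ is literally a (commutative) polynomial in the finitely many indeterminates $\sigma(w)$ that actually appear in $f$, say $w \in W$ for a finite set $W \subseteq \mx \setminus \{1\}$. The question is: what is the closure of the set of "moment vectors" $(\lambda(W_1(\uX)), \dots)$ achievable by finite-dimensional states? If we can show that $f$ being nonnegative on this achievable set is equivalent to $f$ being nonnegative on a full-dimensional semialgebraic set whose Zariski closure is all of the relevant affine space — or more precisely, that the achievable moment points are Zariski-dense in $\mathbb{R}^W$ — then classical Artin applies: a polynomial nonnegative on a Zariski-dense subset of $\mathbb{R}^W$ is globally nonnegative, hence (by Artin) a sum of squares of rational functions in the $\sigma(w)$. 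The final step would then be to rewrite each such square $\sigma(w)^2$, and more importantly each generator $\sigma(w)$ itself, in the form described in (iii). Here one uses the key algebraic identity: for any word $w$, $\sigma(w) = \sigma((1+w)^*(1+w))/2 + \text{(correction)}$, or more cleanly, polarization identities expressing any $\sigma(u^*v)$ as a $\mathbb{C}$-linear combination of terms $\sigma(h^*h)$; combined with the fact that arbitrary real polynomials in the $\sigma(w)$ — in particular $\pm 1$ and each $\pm\sigma(w)$ — lie in the $*$-algebra generated by the $\sigma(h^*h)$, one converts "SOS of rational functions in $\sigma(w)$" into "quotient of sums of products of $\sigma(h^*h)$'s."

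The main obstacle I anticipate is the Zariski-density (equivalently, full-dimensionality) claim: one must show that the finite-dimensional state evaluations of the word-monomials $\sigma(w)$, $w \in W$, fill up a set that is not contained in any proper real algebraic subset. The worry is hidden algebraic relations — are there nontrivial polynomial identities satisfied by $(\sigma(w_1)(\lambda,\uX), \dots, \sigma(w_k)(\lambda,\uX))$ for \emph{all} finite-dimensional $(\lambda, \uX)$? One expects not, because the free $*$-algebra has enough room: given a target generic point, one should be able to build a matrix tuple and a state realizing approximately prescribed moments, e.g. using large random or carefully block-structured constructions, or invoking that the universal C*-algebra $\mathbb{C}\langle \ux \rangle$ (appropriately completed) has a separating family of finite-dimensional representations, and that vector states on matrix algebras are plentiful. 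Making this precise — ruling out all hidden polynomial relations among the $\sigma(w)$ under finite-dimensional evaluation — is the technical heart. Once density is in hand, the rest is classical Artin plus the elementary polarization bookkeeping for clause (iii); the involution and the $*$-algebra structure ensure everything stays self-adjoint/real throughout.
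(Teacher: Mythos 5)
Your handling of (iii)$\Rightarrow$(ii)$\Rightarrow$(i) is fine, and you correctly note that $h$ in (iii) ranges over all of $\ncstate$, but the crux of your plan for (i)$\Rightarrow$(iii) rests on a false inference: ``a polynomial nonnegative on a Zariski-dense subset of $\R^W$ is globally nonnegative.'' Zariski-density guarantees that a polynomial \emph{vanishing} on the set vanishes identically; it does not propagate nonnegativity. The set of realizable truncated moment vectors is a \emph{proper} semialgebraic subset of the moment space (it obeys Cauchy--Schwarz-type inequalities), and a polynomial can easily be nonnegative on it while negative off it. So the gap is not the one you flag (density of the moment set, which does hold), but the very next step, where you invoke Artin.

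A counterexample that sinks the plan and is the prototypical example of the theory: take $n=1$, $W=\{x,x^2\}$, and $f=\sig(x^2)-\sig(x)^2$. The achievable moment set is $\{(a,b):b\ge a^2\}$, which is two-dimensional (hence Zariski-dense in $\R^2$), and $f\ge0$ on $\cD^\fin(\emptyset)$ by Cauchy--Schwarz; yet the commutative polynomial $f(a,b)=b-a^2$ is visibly \emph{not} globally nonnegative and admits no SOS-of-rational-functions certificate in $a,b$. The theorem nevertheless holds for this $f$ precisely because (iii) is a strictly richer cone than ``SOS of rational functions in the $\sig(w)$'': one has $f=\sig\big((x-\sig(x))^*(x-\sig(x))\big)$ with $h=x-\sig(x)\in\ncstate$ genuinely depending on $\sig$. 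The certificate internalizes the Cauchy--Schwarz inequality that cuts out the moment set, so one cannot first forget that constraint, obtain global nonnegativity, and then recover it by bookkeeping. (Incidentally, if you \emph{did} have an Artin certificate $fq^2=\sum_i p_i^2$ with $p_i,q\in\state^{\sa}$, the final rewriting would be trivial, since $g^2=\sig(g^*g)$ for any $g\in\state^{\sa}$; but the premise fails.) The paper's proof goes a substantially different route: it argues on the cone of state polynomials directly, using a truncated algebraic GNS construction to convert a separating functional into a concrete finite-dimensional state evaluation, together with real-algebraic and invariant-theoretic results from the positive trace polynomial literature.
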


For example, $\sig(x_1^2)\sig(x_2^2)-\sig(x_1x_2)\sig(x_2x_1)$ is an everywhere nonnegative state polynomial (cf. Cauchy-Schwarz inequality); this can be certified in a purely algebraic way as
$$
\sig(x_1^2)\sig(x_2^2)-\sig(x_1x_2)\sig(x_2x_1)
=\frac{\sig\Big(
\big(\sig(x_2^2)x_1-\sig(x_2x_1)x_2\big)^*
\big(\sig(x_2^2)x_1-\sig(x_2x_1)x_2\big)
\Big)}{\sig(x_2^2)}.
$$

The implication (ii)$\Rightarrow$(i) is immediate, and the implication (iii)$\Rightarrow$(ii) is routine (special care is only required at state evaluations where the denominator of $f$ vanishes). On the other hand, the implication (i)$\Rightarrow$(iii) is the core of Theorem \ref{t:h17}, and it relies on results intertwining
real algebra and invariant theory \cite{ksv17,univar20} and a truncated algebraic variant of the GNS construction \cite{cvxpos11}.
One particular consequence of Theorem \ref{t:h17} is that for unconstrained state polynomial positivity,
there is no difference between operator state evaluations and matricial state evaluations of all finite dimensions (as opposed to the constrained setup, such as Example \ref{e:findim}).

The algebraic certificate (iii) in Theorem \ref{t:h17} has similar shortcomings than its classical analog (solution of Hilbert's 17th problem). Namely, quotients are indeed required; in other words, while not every nonnegative state polynomial are in $\QM(\emptyset)$, closing this set under products and quotients describes all nonnegative state polynomials. Furthermore, the complexity of the algebraic certificate in (iii) cannot be bounded by the degree of $f$ in general.

Trace and moment analogs of Theorem \ref{t:h17} fail.
More precisely, the tracial analog of Theorem \ref{t:h17} holds for $n=1$ (only one operator variable) \cite{univar20}. 
For (at least some) $n>1$, the failure of (i)$\Rightarrow$(ii) in Theorem \ref{t:h17} for trace polynomial positivity is equivalent to the refutation of Connes' embedding problem \cite{connes,trpos}.
Moreover, (ii)$\Rightarrow$(iii) in Theorem \ref{t:h17}
for trace polynomial positivity likewise fails; nevertheless, there is a weak SOS certificate with denominators for trace-positive noncommutative polynomials on tracial von Neumann algebras \cite{trpos}. 
Similarly,
moment polynomials nonnegative on all measures do not necessarily admit an SOS certificate as in Theorem \ref{t:h17}; concrete examples are given in \cite{blekherman2022,mompop}. Nevertheless, every nonnegative moment polynomial admits an SOS representation after an arbitrarily small perturbation of its (possibly high-degree) coefficients \cite{mompop}; this perturbative Positivstellensatz is established with intrinsically commutative tools and might not hold for state or trace polynomials.

\subsection{Archimedean Positivstellensatz for state polynomials}

Next, we consider state polynomials that are positive on bounded state semialgebraic sets. Let $C\subset\ncstate^\sa$ be a constraint set. Since we are interested in bounded domains, it is reasonable to assume that $C$ is archimedean, as seen in Section \ref{sec:state}. In general, there is a difference between positivity on $\cD^\fin(C)$ and positivity on $\cD^\infty(C)$ (one can modify Example \ref{e:findim} by enlarging $C$ to obtain an archimedean constraint set). The latter one admits an algebraic certificate in the spirit of Putinar's Positivstellensatz \cite{Put} for classical polynomials and the Helton-McCullough Positivstellensatz \cite{HM} for noncommutative polynomials.

\begin{theorem}[\cite{stateopt23}]\label{t:arch}
Let $C\subset\ncstate^\sa$ be an archimedean constraint set. The following are equivalent for $f\in\state^{\sa}$:
\begin{enumerate}[(i)]
\item $f\ge0$ on $\cD^\infty(C)$;
\item $f+\varepsilon\in\QM(C)$ for every $\varepsilon>0$.
\end{enumerate}
\end{theorem}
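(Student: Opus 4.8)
The nontrivial direction is (i)$\Rightarrow$(ii); the reverse implication is immediate since elements of $\QM(C)$ are nonnegative on $\cD^\infty(C)$ and $\varepsilon>0$ pushes $f+\varepsilon$ strictly positive. The plan is to run the standard GNS/Positivstellensatz machinery, but with the twist that the ``algebra'' here is the mixed object $\ncstate$ together with its evaluation semantics, and the linear functional one builds must be compatible with the state symbol $\sig$. So suppose toward a contradiction that $f\ge0$ on $\cD^\infty(C)$ but $f+\varepsilon\notin\QM(C)$ for some $\varepsilon>0$. First I would set up the right notion of quadratic module: let $M=\QM(C)\subset\state^\sa$, which by the archimedean hypothesis \eqref{e:arch} is an archimedean quadratic module in the commutative $*$-algebra $\state$ (note $N-\sum x_j^2\in M$ forces, for each generator $\sig(w)$, a bound $N'-\re\sig(w), N'+\re\sig(w)\in M$ and similarly for $\im\sig(w)$, using Cauchy–Schwarz-type identities inside $\ncstate$ — this is exactly the remark after \eqref{e:arch} that $C$ archimedean $\Rightarrow$ $\QM(C)$ archimedean).

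Next, since $f+\varepsilon\notin M$ and $M$ is an archimedean quadratic module, a Hahn–Banach / Eidelheit separation argument (the abstract Positivstellensatz machinery, e.g. the version used in the proofs of Putinar's and the Helton–McCullough theorems) produces a $\state$-linear functional $L:\state\to\C$ with $L(1)=1$, $L(M)\subseteq[0,\infty)$, and $L(f+\varepsilon)\le0$, hence $L(f)\le-\varepsilon<0$. The key step is then to realize $L$ as an honest state evaluation: I would extend $L$ to a functional $\tilde L$ on $\ncstate$ by a GNS construction. Concretely, $(g,h)\mapsto L(\sig(g^*h))$ defines a positive semidefinite sesquilinear form on $\ncstate$ (positivity because $\sig(g^*g)\in M$, the principal quadratic module, so $L(\sig(g^*g))\ge0$); modding out the kernel and completing gives a Hilbert space $\cH$, a $*$-representation $\pi$ of $\px$ on $\cH$ sending $x_j$ to a self-adjoint operator $X_j$, and a cyclic unit vector $\xi$. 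The archimedean bound forces each $X_j$ bounded (with $\|X_j\|\le\sqrt N$), so $\uX\in\cB(\cH)^n$ with $\cH$ separable. Crucially, one must define the state $\lambda$ on $\cB(\cH)$ so that $\lambda(\pi(w))=L(\sig(w))$ for all words $w$: this is where compatibility of the two layers is used — the $\state$-linearity of $L$ (i.e. $L(\sig(w)\,p)=$ ``$\sig(w)$ evaluated'' times $L(p)$) must translate into $\lambda$ being a well-defined positive unital functional, at least on the von Neumann algebra generated by $\uX$, and then extended (Hahn–Banach for states, or Arveson extension) to all of $\cB(\cH)$. Finally one checks $c(\lambda,\uX)\succeq0$ for $c\in C$ (because $L(\sig(h^*ch))\ge0$ for all $h$, which says the positive part of $c(\lambda,\uX)$ has full support against the cyclic vector and its images) so $(\lambda,\uX)\in\cD^\infty(C)$, yet $f(\lambda,\uX)=L(f)<0$, contradicting (i).

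The main obstacle, and the place where this genuinely differs from the classical Putinar and Helton–McCullough arguments, is the \emph{self-referential} nature of $\sig$: the functional $L$ lives on $\state$, but to evaluate a noncommutative state polynomial one needs a state $\lambda$ on $\cB(\cH)$ that \emph{reproduces} $L$ on words, and one must verify that the GNS data extracted from $L$ are mutually consistent — that the operators $X_j$ coming from the $\px$-part and the functional $\lambda$ coming from the $\sig$-part fit together so that $\tilde L(f_j^*c_jf_j)=L(\sig(f_j^*c_jf_j))$ for \emph{all} $f_j\in\ncstate$, not merely $f_j\in\px$. This requires iterating the construction, or building $L$ on the larger partially-ordered vector space spanned by products $\sig(w_1)\cdots\sig(w_k)\,p$ and checking the quadratic module there is still archimedean; the substitution of ``nested'' $\sig$'s (a $\sig$ symbol appearing inside another $\sig$-argument via $f_j\in\ncstate$) is handled because in any evaluation $\lambda(g(\lambda,\uX))$ with $g\in\ncstate$ is a scalar, so nesting collapses — but tracking this collapse at the algebraic level of $L$, and ensuring it is respected by the separation argument, is the delicate bookkeeping. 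Once that is in place, boundedness of the $X_j$ is automatic from archimedeanity and the rest is the routine GNS endgame.
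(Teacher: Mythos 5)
Your two-step scaffold (a separation argument on $\state$ followed by a GNS construction on the free algebra) matches the paper's overall strategy, and you correctly identify that the difficulty lives in the interface between the commutative $\sig$-layer and the noncommutative $x$-layer. However, there is a genuine gap precisely at the step you call ``delicate bookkeeping,'' and it is not bookkeeping: the Hahn--Banach/Eidelheit separation you invoke produces only a \emph{linear} functional $L:\state\to\C$ that is nonnegative on $\QM(C)$ and negative at $f$. What the GNS step actually requires is a \emph{ring homomorphism} $\varphi:\state\to\C$ (a point of the real spectrum). The reason is structural: a state evaluation $(\lambda,\uX)$ sends $\sig(w)$ to the scalar $\lambda(w(\uX))$, so it automatically satisfies $\varphi(\sig(u)\sig(v))=\varphi(\sig(u))\varphi(\sig(v))$. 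When you later verify $c(\lambda,\uX)\succeq0$ for $c\in C$, you expand $c=\sum_\alpha p_\alpha q_\alpha$ with $p_\alpha\in\state$, $q_\alpha\in\px$, and you need $\langle c(\lambda,\uX)g(\uX)\xi,g(\uX)\xi\rangle=\sum_\alpha\varphi(p_\alpha)\,L(\sig(g^*q_\alpha g))$ to coincide with $L(\sig(g^*cg))=\sum_\alpha L(p_\alpha\,\sig(g^*q_\alpha g))$. This identity is exactly multiplicativity of $L$ on $\state$; a generic separating functional does not have it, and neither your later appeal to Arveson extension nor to ``iterating the construction'' supplies it.

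The paper fills this hole with a different tool: since $\state$ is a (countably generated) commutative $*$-algebra and $\QM(C)$ is an archimedean quadratic module in it, the \emph{Kadison--Dubois representation theorem} (Jacobi's Positivstellensatz) applies and directly yields a $\C$-algebra homomorphism $\varphi:\state\to\C$ that is nonnegative on $\QM(C)$ and satisfies $\varphi(f)<0$. Only after this commutative step does one set $\psi(w):=\varphi(\sig(w))$ on $\px$, check $\psi$ is a state on the free algebra via $\varphi(\sig(g^*g))\ge0$, run the Helton--McCullough GNS construction to get bounded operators $\uX$ (boundedness coming from archimedeanity) and a vector state $\lambda$, and verify membership in $\cD^\infty(C)$ using $\varphi(\sig(h^*ch))\ge0$ together with multiplicativity. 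So the missing idea in your proposal is to replace the bare Hahn--Banach separation on $\state$ with Kadison--Dubois; separation alone does not produce the point evaluation your GNS endgame needs, and one cannot recover multiplicativity after the fact.
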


The implication (ii)$\Rightarrow$(i) is straightforward.
The proof of (i)$\Rightarrow$(ii) essentially decomposes into two steps, a commutative one and a noncommutative one. If $f$ is a state polynomial for which (ii) fails, then one first shows that there is a formal evaluation (homomorphism into $\C$) on $\state$ that is negative at $f$ but nonnegative on $\QM(C)$. This is a commutative problem, resolved by the Kadison-Dubois representation theorem that relies on $\QM(C)$ being archimedean. Afterwards, one uses the GNS construction (on the free algebra \cite{HM}) to show that such an evaluation is actually a state evaluation, resulting in a point in $\cD^\infty(C)$ where $f$ is negative.

Theorem \ref{t:arch} has a somewhat more analytic flavor than Theorem \ref{t:h17}. Also, while the positivity certificate in Theorem \ref{t:h17} requires denominators (and products), the one in Theorem \ref{t:arch} instead relies on the perturbation of the constant term. As explained in Section \ref{sec:conv} below, this feature is important for state polynomial optimization.

One can also derive a certificate for positivity of noncommutative state polynomials on bounded state semialgebraic sets involving an auxiliary self-adjoint noncommuting variable $x_0$ (freely independent of $x_1,\dots,x_n$).

\begin{corollary}\label{c:ncarch}
Let $C\subset\ncstate^\sa$ be an archimedean constraint set. 
Let $x_0$ be an auxiliary self-adjoint variable.
The following are equivalent for $f\in\ncstate^{\sa}$:
\begin{enumerate}[(i)]
\item $f\succeq0$ on $\cD^\infty(C)$;
\item $\sig(x_0fx_0)+\varepsilon\in\QM(C\cup\{\pm(1-x_0^2)\})$ for every $\varepsilon>0$.
\end{enumerate}
Here, $\QM(C\cup\{\pm(1-x_0^2)\})$ is a state quadratic module within state polynomials in variables $x_0,\dots,x_n$.
\end{corollary}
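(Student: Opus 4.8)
The plan is to deduce Corollary~\ref{c:ncarch} from Theorem~\ref{t:arch} by reducing the positivity of the noncommutative state polynomial $f$ to the positivity of a genuine (commutative) state polynomial, using the auxiliary variable $x_0$ as a device to contract an operator inequality to a scalar inequality. Concretely, set $C'=C\cup\{\pm(1-x_0^2)\}$, which is again archimedean (the extra variable is bounded by $1$, and one adds $1-x_0^2$ to the archimedean witness for $C$). Apply Theorem~\ref{t:arch} to the state polynomial $g:=\sig(x_0 f x_0)\in\state^{\sa}$ (note $x_0 f x_0\in\ncstate^{\sa}$ since $f=f^*$ and $x_0=x_0^*$, so $g$ is indeed a self-adjoint state polynomial in the variables $x_0,\dots,x_n$). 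By Theorem~\ref{t:arch}, condition~(ii) of the corollary is equivalent to $g\ge0$ on $\cD^\infty(C')$, so it remains to show
\begin{equation*}
\big(f\succeq0 \text{ on }\cD^\infty(C)\big)\iff\big(\sig(x_0 f x_0)\ge0\text{ on }\cD^\infty(C')\big).
\end{equation*}

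For the forward direction: given $(\lambda,X_0,\uX)\in\cD^\infty(C')$, the tuple $(\lambda,\uX)$ lies in $\cD^\infty(C)$ (the constraints in $C$ do not involve $x_0$, and adjoining a bounded self-adjoint operator does not affect them), hence $f(\lambda,\uX)\succeq0$. Then $X_0 f(\lambda,\uX) X_0\succeq0$ as well, and applying the positive functional $\lambda$ gives $\sig(x_0fx_0)(\lambda,X_0,\uX)=\lambda\big(X_0 f(\lambda,\uX) X_0\big)\ge0$. For the reverse direction: suppose $f\not\succeq0$ on $\cD^\infty(C)$, so there is $(\lambda,\uX)\in\cD^\infty(C)$ and a unit vector $u\in\cH$ with $\langle f(\lambda,\uX)u,u\rangle<0$. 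Since all results may be taken with respect to vector states (by the GNS remark in the excerpt), assume $\lambda=\langle\,\cdot\,u,u\rangle$. Now choose $X_0$ to be a rank-one-ish self-adjoint contraction that ``concentrates'' on $u$: the cleanest choice is $X_0=\dyad{u}$ (an orthogonal projection, hence $\|X_0\|=1$, so $\pm(1-x_0^2)$ evaluates to $\pm(I-X_0)\succeq 0$... wait, $I-X_0^2=I-X_0\succeq0$ but $X_0^2-I=X_0-I\preceq0$), so instead take the self-adjoint unitary $X_0=2\dyad{u}-I$, which satisfies $X_0^2=I$ exactly, so both $\pm(1-x_0^2)$ evaluate to $0\succeq0$. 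Then $(\lambda,X_0,\uX)\in\cD^\infty(C')$, and $X_0 u=u$, so
\begin{equation*}
\sig(x_0fx_0)(\lambda,X_0,\uX)=\langle X_0 f(\lambda,\uX)X_0 u,u\rangle=\langle f(\lambda,\uX)u,u\rangle<0,
\end{equation*}
contradicting (ii). This establishes the equivalence, and hence the corollary.

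The main technical point to get right — and the one place where a reader might stumble — is the construction of $X_0$ in the reverse direction: one needs a self-adjoint operator obeying the constraint $1-x_0^2$ (with both signs, i.e.\ $X_0^2=I$) that also acts as the identity on the distinguished vector $u$, so that sandwiching $f(\lambda,\uX)$ by $X_0$ recovers precisely the offending matrix element $\langle f(\lambda,\uX)u,u\rangle$. The symmetry $X_0=2\dyad{u}-I$ does both jobs. A secondary subtlety is that one should phrase things in terms of vector states from the outset (invoking the GNS reduction noted earlier in the survey) so that the negativity of $f(\lambda,\uX)$ as an operator is witnessed by a concrete unit vector; without this one would have to pass to a GNS representation first, which works identically but is slightly more verbose. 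Everything else — archimedeanity of $C'$, the invariance of the $C$-constraints under adjoining $x_0$, positivity of $\lambda$ applied to $X_0 f X_0$ — is routine, and the heavy lifting is entirely delegated to Theorem~\ref{t:arch}.
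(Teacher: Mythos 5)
Your overall plan is correct and matches the paper's: reduce to Theorem~\ref{t:arch} by showing that (ii) is equivalent to $\sig(x_0fx_0)\ge0$ on $\cD^\infty(C')$, where $C'=C\cup\{\pm(1-x_0^2)\}$. Your forward direction is right. However, your reverse direction has a genuine gap: you conflate the vector that represents the state with the vector that witnesses non-positivity of $f(\lambda,\uX)$, and these are generally distinct.

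Concretely, after invoking GNS to reduce to a vector state, the vector $u$ with $\lambda=\langle\,\cdot\,u,u\rangle$ is the GNS cyclic vector, and you have no freedom to identify it with a chosen eigenvector of $f(\lambda,\uX)$. Writing ``assume $\lambda=\langle\,\cdot\,u,u\rangle$'' where $u$ is the unit vector with $\langle f(\lambda,\uX)u,u\rangle<0$ is not what GNS gives you. Nonpositivity of $f(\lambda,\uX)$ as an operator only supplies \emph{some} unit vector $v$ with $\langle f(\lambda,\uX)v,v\rangle<0$, and in general $v\ne u$ (e.g.\ $f(\lambda,\uX)$ could be $\operatorname{diag}(1,-1)$ while $\lambda$ is the vector state at $e_1$). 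With your choice $X_0=2\dyad{u}-I$, which fixes $u$, the final computation yields $\sig(x_0fx_0)(\lambda,X_0,\uX)=\langle f(\lambda,\uX)u,u\rangle$, a quantity you do not control when $u$ is the state's cyclic vector; it may well be nonnegative even though $f(\lambda,\uX)$ is not PSD.

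The paper keeps the two vectors separate: it takes $u$ to be the GNS vector and $v$ to be the negativity witness, and then constructs a self-adjoint unitary $X_0$ satisfying $X_0u=\alpha v$ for some unimodular $\alpha$ (one can build such $X_0$ on $\operatorname{span}\{u,v\}$ and act as the identity on the orthogonal complement). Then $\langle X_0 f(\lambda,\uX)X_0 u,u\rangle=\langle f(\lambda,\uX)v,v\rangle<0$. The reflection $2\dyad{u}-I$ is the special case of this construction when $u=v$, but the argument needs the general version: $X_0$ must \emph{transport} the state's vector to the direction where $f$ is negative, not merely fix it.
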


\begin{proof}
Note that $C'=C\cup\{\pm(1-x_0^2)\}$ is archimedean. Thus by Theorem \ref{t:arch} it suffices to show that (i) is equivalent to
\begin{enumerate}[(i')]
\item $\sig(x_0fx_0)\ge0$ on $\cD^\infty(C')$.
\end{enumerate}
First suppose that (i) holds. Let $(\lambda,X_0,\uX)\in \cD^\infty(C')$ be arbitrary. Then $f(\lambda,\uX)\succeq0$ by assumption, so 
$X_0f(\lambda,\uX)X_0\succeq0$ and thus
$$\sig(x_0fx_0)\big(\lambda,X_0,\uX\big)=
\lambda\big(X_0f(\lambda,\uX)X_0\big)\ge0,$$
so (i') holds.
Now suppose (i) does not hold, and let $(\lambda,\uX)\in\cD^\infty(C)$ be such that $f(\lambda,\uX)$ is not positive semidefinite. 
By the GNS construction, we can without loss of generality assume that $\lambda$ is a vector state, given by the unit vector $u\in\cH$. Let $v\in\cH$ be a unit vector such that $\langle f(\lambda,\uX)v,v\rangle<0$. It is easy to see that there exists a self-adjoint unitary $X_0$ such that $X_0u=\alpha v$ for some $\alpha\in\C$ with $|\alpha|=1$. Then
$(\lambda,X_0,\uX)\in \cD^\infty(C')$ and
$$\sig(x_0fx_0)\big(\lambda,X_0,\uX\big)
=\lambda\big(X_0f(\lambda,\uX)X_0\big)
=\langle X_0f(\lambda,\uX)X_0u,u\rangle
=\langle f(\lambda,\uX)v,v\rangle<0,
$$
so (i') does not hold.
\end{proof}

\section{SDP hierarchies for optimization of state, trace, and moment polynomials}\label{sec:conv}

Theorem \ref{t:arch} leads to a semidefinite programming (SDP) hierarchy for state polynomial optimization on bounded state semialgebraic sets.
Given $C\subset\ncstate$ and $f\in\state^\sa$ consider the optimization problem
\begin{equation}\label{e:optprob}
\alpha_*=\inf\big\{f(\lambda,\uX)\colon (\lambda,\uX)\in\cD^\infty(C)\big\}.
\end{equation}
Various problems in quantum physics can be reduced to state polynomial optimization problems of the form \eqref{e:optprob}; see Section \ref{sec:appl} and references within for examples.
For $d\in \N$ let
$$\QM(C)_d=\left\{
\sum_j\sig(f_j^*c_jf_j)\colon f_j\in \ncstate,\ 
c_j\in\{1\}\cup C,\ 
\deg(f_j^*c_jf_j)\le 2 d
\right\},$$
where $\deg$ denotes the total degree of a noncommutative state polynomial with respect to $x_1,\dots,x_n$ (appearing freely or within the state symbol $\sig$). Then the $\QM(C)_d$ form an increasing sequence of convex cones in $\state$, whose union equals $\QM(C)$. Denote
\begin{equation}\label{e:optprob1}
\alpha_d=\sup\big\{\alpha\in\R\colon f-\alpha \in \QM(C)_d\big\}.
\end{equation}
Computing $\alpha_d$ can be reformulated as an SDP.
Furthermore, Theorem \ref{t:arch} then implies the following.

\begin{corollary}\label{c:hier1}
Let $C\subset\ncstate^\sa$ be an archimedean constraint set, and $f\in\state^{\sa}$.
Then $(\alpha_d)_d$ is an increasing sequence converging to the solution $\alpha_*$ of \eqref{e:optprob}.
\end{corollary}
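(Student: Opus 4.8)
The plan is to derive the corollary as a bookkeeping consequence of Theorem~\ref{t:arch}, together with the elementary facts recalled after \eqref{e:optprob1}: the cones $\QM(C)_d$ increase with $d$ and their union is $\QM(C)$.

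First I would record monotonicity and an a priori upper bound. Since $\QM(C)_d\subseteq\QM(C)_{d+1}$, the feasible set $\{\alpha\in\R\colon f-\alpha\in\QM(C)_d\}$ of \eqref{e:optprob1} only grows with $d$, whence $\alpha_d\le\alpha_{d+1}$. Moreover, if $f-\alpha\in\QM(C)_d\subseteq\QM(C)$ then $f-\alpha\ge0$ on $\cD^\infty(C)$ by the ``obvious'' direction discussed in Section~\ref{sec:qm}, so $\alpha\le f(\lambda,\uX)$ for all $(\lambda,\uX)\in\cD^\infty(C)$, hence $\alpha\le\alpha_*$; taking the supremum over feasible $\alpha$ gives $\alpha_d\le\alpha_*$. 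I would also note that $\alpha_*$ is finite in the nondegenerate case: archimedeanity of $C$ forces $\|X_j\|\le\sqrt N$ on $\cD^\infty(C)$, so each state symbol $\sig(w)$ is bounded there in absolute value by $N^{|w|/2}$, and $f$, being a polynomial in finitely many such symbols, is bounded on $\cD^\infty(C)$ (when $\cD^\infty(C)=\emptyset$ one has $\alpha_*=+\infty$ and trivially $\alpha_d=+\infty$ for all $d$, and there is nothing to prove). Consequently $(\alpha_d)_d$ is increasing and bounded above, hence convergent with $\lim_d\alpha_d\le\alpha_*$.

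For the reverse inequality I would invoke Theorem~\ref{t:arch}. By definition of the infimum, $f-\alpha_*\ge0$ on $\cD^\infty(C)$, and $f-\alpha_*\in\state^\sa$, so Theorem~\ref{t:arch} yields $f-\alpha_*+\varepsilon=f-(\alpha_*-\varepsilon)\in\QM(C)$ for every $\varepsilon>0$. Since $\QM(C)=\bigcup_{d}\QM(C)_d$, there is some $d(\varepsilon)\in\N$ with $f-(\alpha_*-\varepsilon)\in\QM(C)_{d(\varepsilon)}$, so $\alpha_{d(\varepsilon)}\ge\alpha_*-\varepsilon$, and by monotonicity $\lim_d\alpha_d\ge\alpha_*-\varepsilon$. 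Letting $\varepsilon\downarrow0$ gives $\lim_d\alpha_d\ge\alpha_*$, and combining with the previous paragraph, $\lim_d\alpha_d=\alpha_*$.

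I do not expect a genuine obstacle: the analytic heart of the matter has already been absorbed into Theorem~\ref{t:arch}. The only points needing a word of care are the finiteness and possible emptiness of $\cD^\infty(C)$ (handled via archimedeanity) and the fact that the infimum in \eqref{e:optprob} need not be attained — but the latter is immaterial, since the argument uses only $f-\alpha_*\ge0$ on $\cD^\infty(C)$, which holds by the very definition of $\alpha_*$. One may finally remark, as already observed after \eqref{e:optprob1}, that each $\alpha_d$ is computable by an SDP, so the corollary indeed provides a convergent semidefinite programming hierarchy for \eqref{e:optprob}.
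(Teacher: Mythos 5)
Your proposal is correct and takes the same route the paper has in mind: the paper states only that ``Theorem~\ref{t:arch} then implies the following,'' and your argument --- monotonicity of $\QM(C)_d$, the obvious bound $\alpha_d\le\alpha_*$, and extracting $d(\varepsilon)$ from the $\varepsilon$-perturbed certificate via $\QM(C)=\bigcup_d\QM(C)_d$ --- is precisely that implication spelled out. The one small imprecision is the parenthetical claim that $\alpha_d=+\infty$ for \emph{all} $d$ when $\cD^\infty(C)=\emptyset$; Theorem~\ref{t:arch} only yields $\alpha_d\to+\infty$, but this aside is not load-bearing and the main argument stands.
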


The SDP hierarchy \eqref{e:optprob1} and Corollary \ref{c:hier1} are also applicable to optimization of trace and moment polynomials by adding tracial and commutation relations, respectively, to $C$ as in Subsection \ref{ss:trmom}. In practice, one implements these relations (and other equality constraints from $C$) directly into symbolic manipulation while preparing the SDP input (as reduction rules, or using Gr\"obner bases).

\begin{example}\label{exa:sttrmom2}
Recall
$f=\frac12(\sig(xyxy)+\sig(yxyx))-\sig(xyx)\sig(y)$ on pairs of projections in the comparative example from Section \ref{exa:sttrmom}. Denote $C=\{\pm(x-x^2),\pm(y-y^2)\}$. Using an SDP solver one obtains
$$\sup\big\{\alpha\in\R\colon f-\alpha \in \QM(C)_5\big\}=-\frac{1}{16}$$
numerically. Together with Section \ref{exa:sttrmom}, this indicates that the minimum of $f$ on $\cD^\infty(C)$ is $-\frac{1}{16}$.
When only tracial states are considered (namely, $f$ is viewed as a trace polynomial), one extends $C$ to $C'=C\cup\{ \pm(\sig(uv-vu)\colon u,v\in\Langle x,y\Rangle\}$ and applies an SDP solver to compute
$$\sup\big\{\alpha\in\R\colon f-\alpha \in \QM(C')_4\big\}=-\frac{1}{27}.$$
This gives an alternative to the argument in Section \ref{exa:sttrmom} for the lower bound of the trace polynomial $f$ on pairs of projections.
Note that in this instance, proving that the state polynomial $f+\frac{1}{16}$ is nonnegative on pair of projections requires a sum of squares of degree $2\cdot 5=10$ (one can check that a lower degree is not sufficient); on the other hand, for the trace polynomial $f+\frac{1}{27}$ one only requires a sum of squares of degree $2\cdot 4=8$.
\end{example}

A comprehensive study of this SDP hierarchy is presented in \cite[Section 6]{stateopt23}, while the details about their refinements for trace polynomials and moment polynomials are given in \cite[Section 5]{tropt20} and \cite[Section 5]{mompop}, respectively.

For the purposes of this survey let us briefly address the dual SDP hierarchy. The dual of SDP \eqref{e:optprob1} corresponds to optimizing $L(f)$ over all unital linear functionals $L$ that are nonnegative on the convex cone $\QM(C)_d$. 
While solutions of an SDP and its dual SDP are generally distinct (that is, only weak duality is guaranteed), there is no duality gap for SDPs \eqref{e:optprob1} if the constraint set $C$ is archimedean in the following strong sense: there is $N>0$ such that
$$N-x_1^2-\cdots-x_n^2=\sum_j p_j^*p_j+\sum_k \lambda_kc_k,$$
for some affine $p_j\in\px$, $\lambda_k\in\R_{\ge0}$ and quadratic $c_k\in C$.
In particular, one can always satisfy this condition by simply adding $N-x_1^2-\cdots-x_n^2$ to the set.
Thus one can replace \eqref{e:optprob1} by its dual, which is sometimes more convenient for implementation. Furthermore, there is a sufficient condition that enables one to extract a finite-dimensional optimizer for \eqref{e:optprob1} out of its dual (this condition is based on the matrix rank of the semidefinite constraint of the dual solution).
Also, strong duality allows one to use primal-dual interior point methods, which are the most widely implemented SDP algorithms, for solving \eqref{e:optprob1}.

In practice, the dual SDP is implemented via (multivariate) Hankel matrices. 
As for quadratic modules, let $\ncstate^{\sa}_{d}$ be the subset of state polynomials with total degree at most $d$. 
For state polynomial optimization, we associate to a unital linear functional $L : \ncstate^{\sa}_{2d} \to \C$ the Hankel matrix $\Gamma(L)$ indexed by all words $u,v$ of $\ncstate^{\sa}_d$ with $\Gamma_{u,v}(L) = L(\sig(u^\star v))$. Then the positivity of $L$ relates to the positive semidefiniteness of its associated Hankel matrix $\Gamma(L)$. 
Similarly, to each $c \in C$ with $d_c=\lceil \deg c /2 \rceil$, we associate the so-called localizing matrix $\Gamma(c\, L)$ indexed by all words $u,v$ of $\ncstate^{\sa}_{d-d_c}$ with $\Gamma_{u,v}(c\,L) = L(\sig(u^\star c v))$. 
While the hierarchy \eqref{e:optprob1} is the analog of the Helton-McCullough approach to noncommutative polynomial positivity \cite{HM}, its dual is the analog of the renowned NPA hierarchy \cite{pna10}, and $\Gamma(L)$ plays the role of the noncommutative moment matrix in the NPA hierarchy.

As in the primal SDP hierarchy \eqref{e:optprob1}, one tends to implement equality constraints as reduction rules when preparing the SDP input. For the dual problem, there is also an alternative approach, proposed in \cite[Definition 3.2]{mor24}: once a collection of constraints satisfies the Archimedean condition, any further equality constraint $c=0$ can be imposed on $L$ by requiring $L(c^2)=0$ (instead of $\Gamma(\pm c\, L)\succeq0$).

As an example, the following are Hankel matrices for state, trace, and moment polynomial optimization problems (in two variables $x_1=x$ and $x_2=y$) at the second level of the hierarchy.

\noindent {\bf State regime}.
Suppose one wants to maximize  
$f = \sig(x)^2 + \sig(y)^2$ on $x,y$ such that $1- x^2 \succeq 0$ and $1 - y^2 \succeq 0$. 
Any Hankel matrix at the second level of the hierarchy is then indexed by all state monomials of degree at most two,
\begin{align*}
&1, x, y, \sig(x), \sig(y), 
x^2,y^2, xy, yx, x\sig(x), x\sig(y), y\sig(x), y\sig(y), \\
& \sig(x^2), \sig(y)^2, \sig(xy), \sig(yx),
\sig(x)^2, \sig(y)^2,
\sig(x)\sig(y).
\end{align*}
Given a unital linear functional $L : \ncstate^{\sa}_4 \to \C$, the associated Hankel matrix $\Gamma(L)$ is 
{\footnotesize
\begin{equation*}
  \kbordermatrix{
  & 1 & x & y & \sig(x) & \dots & xy& \dots& \sig(x)\sig(y)
  \\
  & 
  1 & L(\sig(x)) & L(\sig(y)) & L(\sig(x)) & & L(\sig(xy))& & L(\sig(x)\sig(y))
  \\
  & 
  & L(\sig(x)^2) & L(\sig(x)\sig(y)) & L(\sig(x)^2) & &L(\sig(x^2y))&  & L(\sig(x)^2\sig(y))
  \\
  & 
  & & L(\sig(y)^2) & L(\sig(y)\sig(x)) & &L(\sig(yxy))&& L(\sig(x)\sig(y)^2)
  \\
  & 
  & & & 
  L(\sig(x)^2) & & L(\sig(x)\sig(xy))&  & L(\sig(x)^2\sig(y))\\
  &
  &&&& 
  \\
  & 
  & & & & & L(\sig(yx^2y)) & 
  & L(\sig(yx) \sig(x)\sig(y))\\
  &
  &&&&&&
  \\
  &  
  &  & & & &  && L(\sig(x)^2\sig(y)^2)
  }.
\end{equation*}}
We can define similarly the two localizing matrices $\Gamma((1-x^2)L)$ and $\Gamma((1-y^2)L)$, indexed by all state monomials of degree at most one, i.e., $1, x, y, \sig(x), \sig(y)$. 
To find an upper bound on $f$, one then maximizes $L(f)$ over all unital linear functionals $L$ being nonnegative on $QM(C)_2$, or equivalently over the set of positive semidefinite matrices that have the same structure as the above-mentioned state Hankel and localizing matrices. 
The statement of Corollary \ref{c:hier1} then implies that when the indexing sequence includes all monomials of degree $d$ with $d \to \infty$, the optimum of $f$ is achieved. 

\noindent {\bf Trace regime}.
For trace polynomials we demand that the state is tracial,
satisfying $\re( \sig(uv) - \sig(vu)) = \im( \sig(uv) - \sig(vu)) = 0$ for all $u,v \in \Langle x,y \Rangle$.
In our example, this implies the 
$\sig(xy) = \sig(yx)$, 
$\sig(x^2y^2) = \sig(y^2x^2)$ , $\sig(xyxy) = \sig(yxyx)$, etc. 
on the set of tracial Hankel matrices.

\noindent {\bf Moment regime}.
For trace polynomials we demand that the operators commute, 
that is $uv = vu$ for all $u,v \in \Langle x,y \Rangle$.
This implies the constraints
$\sig(xy) = \sig(xy)$, 
$\sig(xyxy) = \sig(x^2y^2) = \sig(y^2x^2) = \sig(yxyx)$ etc. 
on the set of moment Hankel matrices.

It is thus clear that constraints become more restrictive from state to trace to moment regime. Or in other words, one has the inclusion $S_m \subseteq S_t \subseteq S_s$ where $S_m, S_t, S_s$ are the feasible sets involving moment, tracial and state Hankel matrices, respectively.

\section{Applications}\label{sec:appl}

Several problems in quantum information theory can be formulated as state, trace or moment polynomial optimization problems. In this section we present a selection of such problems, and sketch how the SDP hierarchy from Section \ref{sec:conv} applies to them.
For the sake of compatibility with the motivation literature, we adopt some of the physics notation in this section, such as ${}^\dag$ in place of ${}^*$, for example.

\subsection{Entanglement in Werner states}

Quantum entanglement accounts for the non-classical correlations appearing in multipartite quantum systems.
A quantum state $\varrho$ on $(\C^d)^{\ot n}$ (here we view $\varrho$ as a density operator) is \emph{separable}
if it can be written as a convex combination of product states,
\begin{equation*}
  \varrho = \sum_i p_i \varrho_{1}^{(i)} \ot \cdots \ot \varrho_{n}^{(i)},
\end{equation*}
where $p_i \geq 0$, $\sum_i p_i = 1$, and $\varrho_j^{(i)}$ are states on $\C^d$.
If no such decomposition exists, the state is termed \emph{entangled}.
Determining whether a state is separable or entangled is a computationally hard problem~\cite{GURVITS2004448, DBLP:journals/qic/Gharibian10}.

Let us denote the set of separable states in $(\C^d)^{\ot n}$ as $\mathrm{SEP}(n,d)$ and the set of entangled states as $\mathrm{ENT}(n,d)$.
A method to detect entanglement is that of entanglement witnesses:
a hermitian operator is an entanglement witness if 
\begin{align*}
    \tr(\mathcal{W} \varrho) &\geq 0
    \quad \text{for all} \quad \varrho \in \mathrm{SEP}(n,d), \quad\text{and}\\
    \tr(\mathcal{W} \varphi) &<0
\quad \text{for some} \quad \varphi \in \mathrm{ENT}(n,d).
\end{align*}
Thus a witness acts as a separating hyperplane in the state space for the set of separable states.
Here we show how witnesses for the so-called Werner states
can be obtained from the trace polynomial optimization framework. What follows is based on \cite{Huber_2022} (see also \cite{Oberwolfach2021_2140b, huber21}).

Werner states are invariant under the adjoint diagonal action of unitaries,
that is, $U^{\ot n} \varrho (U^{\ot n})^\dag = \varrho$ for all $U \in \mathcal{U}(d)$.
The Schur-Weyl duality states that the commutant of $U^{\ot n}$ on $(\C^d)^{\ot n}$
is the algebra generated by the permutation operators
$\eta_d(\sigma)$ for $\sigma \in S_n$, whose action is given by
\begin{equation*}
 \eta_d(\pi) \ket{v_1} \ot  \cdots \ot \ket{v_n} = \ket{v_{\sigma^{-1}(1)}} \ot \cdots \ot \ket{v_{\sigma^{-1}(n)}}.
\end{equation*}
As a consequence, one can write
$
 \varrho = \sum_{\sigma \in S_n} a_\sigma \eta_d(\sigma)
$ with $a_\sigma \in \C$.
The same invariance can be imposed upon $\WW$, and so one writes $\WW = \sum_{\sigma \in S_n} w_\sigma \eta_d(\sigma)$.
Furthermore, under this symmetry the Hilbert space decomposes as
\begin{equation*}
 (\C^d)^{\ot n} \simeq \bigoplus_{\substack{\lambda \vdash n \\ \operatorname{ht}(\lambda) \leq d}} \mathcal{U}_\lambda \ot \mathcal{S_\lambda},
\end{equation*}
where
the unitary group $\mathcal{U}(d)$ acts on the spaces $\mathcal{U}_\lambda$ and the symmetric group $S_n$ on the spaces $\mathcal{S_\lambda}$,
and the representations are labeled by partitions $\lambda$ of $n$ of height at most $d$.
As a consequence, the state $\varrho$ block-diagonalizes as
\begin{equation*}
 \varrho =
 \bigoplus_{\substack{\lambda \vdash n \\ \operatorname{ht}(\lambda) \leq d}}
 \tilde \one \ot \varrho_\lambda ,
\end{equation*}
where $\tilde \one $ is the maximally mixed state on $\mathcal{U}_\lambda$ (i.e., a suitable multiple of the identity).
This allows to show the following.

\begin{theorem}[\cite{Huber_2022}]\label{thm:dim_free}
 For every entangled Werner state there exists a witness $w\in \C S_n$ satisfying
\begin{equation*}
  \tr\big(\eta_d(w) \varrho\big) \geq 0 \quad \text{for all } \varrho \in \SEP(d,n),\, \forall d \in \N .
\end{equation*}
\end{theorem}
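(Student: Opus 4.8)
The plan is to encode a Werner state by its family of moments over the symmetric group, show that separability is controlled by a single \emph{dimension-independent} convex body in this finite-dimensional moment space, and conclude by a separating-hyperplane argument; from the trace polynomial viewpoint of this survey, the nonnegativity of $\eta_d(w)$ on $\SEP(d,n)$ for all $d$ will turn out to be exactly a trace polynomial positivity statement, which is the structural reason a dimension-free witness can be found.

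Concretely, to a Werner state $\varrho$ on $(\C^d)^{\ot n}$ I would attach the vector $m^\varrho=\big(\tr(\eta_d(\sigma)\varrho)\big)_{\sigma\in S_n}$. Since $\varrho$ lies in the commutant of $U^{\ot n}$ — a semisimple algebra linearly spanned by the $\eta_d(\sigma)$ and carrying a nondegenerate trace form — the assignment $\varrho\mapsto m^\varrho$ is injective on Werner states, and $\tr(\eta_d(w)\varrho)=\sum_\sigma w_\sigma m^\varrho_\sigma$ for $w=\sum_\sigma w_\sigma\sigma\in\C S_n$. For a product vector $\Phi=\phi_1\ot\cdots\ot\phi_n$ the number $\langle\Phi|\eta_d(\sigma)|\Phi\rangle$ depends only on the Gram matrix $G=(\langle\phi_i|\phi_j\rangle)$ — denote it $m^G_\sigma$, a fixed polynomial obtained as a product of cyclic products of $G$-entries along the cycles of $\sigma$ — so that $w\mapsto\sum_\sigma w_\sigma m^G_\sigma$ is a multilinear trace polynomial evaluated at the rank-one projections $\ket{\phi_i}\bra{\phi_i}$. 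The twirl $\mathcal T(\cdot)=\int U^{\ot n}(\cdot)(U^{\ot n})^\dag\,{\rm d}U$ fixes each $\eta_d(\sigma)$ and maps $\SEP(d,n)$ into the separable Werner states, so $\tr(\eta_d(w)\varrho)=\tr(\eta_d(w)\mathcal T(\varrho))$ for $\varrho\in\SEP(d,n)$; since the extreme points of $\SEP(d,n)$ are pure product states, ``$\eta_d(w)\ge0$ on $\SEP(d,n)$'' is equivalent to $\sum_\sigma w_\sigma m\ge0$ for all $m$ in the compact convex body $S_d:=\operatorname{conv}\{m^G: G\succeq0,\ G_{ii}=1,\ \operatorname{rank}G\le d\}$, and (by injectivity) $\varrho$ is separable precisely when $m^\varrho\in S_d$.

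Next I would use that the Gram matrices of $n$ unit vectors in $\C^d$ are exactly the $n\times n$ correlation matrices of rank $\le d$, which exhaust all correlation matrices once $d\ge n$; hence $S_1\subseteq S_2\subseteq\cdots\subseteq S_n=S_{n+1}=\cdots=:S_\infty$. The heart of the proof is the claim $(\ast)$: \emph{if $\varrho$ is a Werner state on $(\C^d)^{\ot n}$ with $m^\varrho\in S_\infty$, then already $m^\varrho\in S_d$}. Granting $(\ast)$, an entangled Werner state $\varrho_0$ on $(\C^{d_0})^{\ot n}$ satisfies $m^{\varrho_0}\notin S_{d_0}$, hence $m^{\varrho_0}\notin S_\infty$; separating the point $m^{\varrho_0}$ from the compact convex $S_\infty$ yields $w\in\C S_n$ — which may be chosen self-adjoint for the involution $\sigma^*=\sigma^{-1}$, so that $\eta_d(w)$ is Hermitian — with $\sum_\sigma w_\sigma m^{\varrho_0}_\sigma<0$ and $\sum_\sigma w_\sigma m\ge0$ for every $m\in S_\infty\supseteq\bigcup_d S_d$. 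Unwinding the previous paragraph gives $\tr(\eta_{d_0}(w)\varrho_0)<0$ and $\tr(\eta_d(w)\varrho)\ge0$ for all $d\in\N$ and $\varrho\in\SEP(d,n)$, which is the theorem.

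Finally, to prove $(\ast)$ I would invoke the Schur--Weyl block decomposition recalled above. Writing $\eta_k(\sigma)=\bigoplus_{\operatorname{ht}(\lambda)\le k}\operatorname{id}_{\mathcal U_\lambda}\ot\pi_\lambda(\sigma)$ and taking the Fourier transform over $S_n$, a Werner state on $(\C^k)^{\ot n}$ becomes a tuple of positive semidefinite blocks indexed by partitions $\lambda\vdash n$ which vanish for $\operatorname{ht}(\lambda)>k$; the block of $m^G$ at $\lambda$ is a positive multiple of $\operatorname{tr}_{\mathcal U_\lambda}\ket{\Phi_\lambda}\bra{\Phi_\lambda}$, where $\Phi_\lambda$ is the $S_n$-isotypic component of the product vector realizing $G$, hence positive semidefinite and zero exactly when $\Phi_\lambda=0$. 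Writing $m^\varrho=\sum_k p_k m^{G_k}$ with $p_k>0$ to witness $m^\varrho\in S_\infty$ and comparing $\lambda$-blocks for $\operatorname{ht}(\lambda)>d$ forces $(\Phi^{(k)})_\lambda=0$ there for every $k$. The remaining step — the one I expect to be the main obstacle — is the representation-theoretic fact that a product vector $\phi_1\ot\cdots\ot\phi_n$ has vanishing $S_n$-isotypic component in every type of height $>d$ if and only if the $\phi_i$ span a space of dimension $\le d$, i.e.\ $\operatorname{rank}G\le d$; the nontrivial direction should follow by choosing $r=\operatorname{rank}G>d$ linearly independent vectors among the $\phi_i$ and exhibiting a hook-shaped Young symmetrizer of height $r$, with those vectors placed in its first column, that does not annihilate $\Phi$. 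This yields $\operatorname{rank}G_k\le d$ for all $k$, so $m^\varrho=\sum_k p_k m^{G_k}\in S_d$, establishing $(\ast)$.
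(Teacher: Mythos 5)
Your argument is correct, but it reaches the theorem by a genuinely different route from the one the survey (and the underlying reference) sketch. The paper works on the dual side: starting from an ordinary finite-dimensional entanglement witness $w$ for the given Werner state, it promotes $w$ to a dimension-free witness $\tilde w = w + u$ by adding a positive element $u\in\C S_n$ supported only on partitions of height exceeding the ambient dimension; the real content is the existence of such a $u$. You instead argue on the primal side: you coordinatize separable Werner states by the compact convex moment bodies $S_d\subset\C^{S_n}$, observe that $S_d$ stabilizes once $d\ge n$, reduce the theorem to a finite-dimensional separating-hyperplane argument, and isolate the single substantive lemma $(\ast)$ --- a Werner state on $(\C^d)^{\ot n}$ whose moment vector lies in $S_\infty=S_n$ already has it in $S_d$. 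These two proofs are near-dual reflections of each other through the convex geometry of the moment body, and both hinge on the same Schur--Weyl fact that you correctly flag as the ``main obstacle'': a product vector $\phi_1\ot\cdots\ot\phi_n$ has vanishing $\lambda$-isotypic components for all $\height(\lambda)>d$ if and only if the $\phi_i$ span a space of dimension at most $d$. Your separation argument has the virtue of concentrating the difficulty into $(\ast)$ and making the injectivity/compactness bookkeeping explicit; the paper's lifting argument has the virtue of producing the dimension-free witness more explicitly from a low-dimensional one. One small streamlining for the nontrivial direction of the final representation-theoretic fact: rather than a hook Young symmetrizer, apply the column antisymmetrizer over $d+1$ positions carrying linearly independent $\phi_i$'s and invoke the branching rule ($\lambda\downarrow S_{d+1}$ contains the sign representation if and only if $\height(\lambda)\ge d+1$); the nonvanishing of the antisymmetrized tensor then forces a nonzero $\lambda$-component with $\height(\lambda)\ge d+1$ in one line.
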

Such witness has a nonnegative expectation value on all separable states in all dimensions,
and is thus called \emph{dimension-free}.
The key idea in the proof is that given a witness $w$ in dimension $d'<d$,
it can be promoted to a witness $\tilde w = w + u$ in all dimensions by adding
an element $u\succeq 0$ whose support lies exclusively in partitions of height larger than $d$ (but at most $n$).
The existence of such $u$ was established in \cite{Huber_2022}.

Now consider the task of finding such witnesses.
If a permutation $\sigma$ decomposed into cycles as
$\sigma = (\alpha_1 \dots \alpha_r) \dots (\xi_1 \dots \xi_t) \in S_n$,
then the following identity holds:
\begin{equation*}
 \tr\big( \eta_d(\sigma^{-1})  (X_1 \ot \cdots \ot X_n) \big)
 =
 \tr\big(X_{\alpha_1}  \cdots X_{\alpha_r}\big) \cdots
 \tr\big(X_{\xi_1}  \cdots X_{\xi_t}\big).
\end{equation*}
That is, permutations acting on tensor products are mapped to trace polynomials.
For an element $w \in \C S_n$ to correspond to a dimension-free Werner state witness,
one requires that $\min \tr\big( \eta_n(w) \varrho\big) \geq 0$ for all $\varrho \in \SEP(n,n)$.
Such minimum is attained at the extreme points of the separable set,
which consists of pure product states.
One thus requires that
\begin{equation*}
p = \min_{\ket{\phi_1}, \dots, \ket{\phi_n} \in \C^n} \tr\big(\eta_n(w) \dyad{\phi_1} \ot \cdots \ot \dyad{\phi_n}\big) \geq 0,
\end{equation*}
which corresponds to a trace polynomial in the variables 
$\dyad{\phi_1}, \dots, \dyad{\phi_n}$. 
We now strengthen the domain of nonnegativity in the above optimization problem
by asking for the positivity of $p$ for all projections from any tracial von Neumann algebra.
This can be done through the SDP hierarchy as in Section \ref{sec:conv} (see \cite{Huber_2022} for details).

\subsection{Nonlinear Bell inequalities}

One of the motivating applications for state polynomial optimization is finding maximal violations of nonlinear Bell inequalities, as outlined in this section (see \cite{stateopt23,ligthart22} for details). 

A quantum network \cite{Fri,pozas2019bounding,tavakoli22} consists of several non-communicating parties with measurements (modelled by projection-valued measures), some of which share independent multipartite quantum states. The correlation of such a model refers to the array of conditional expectations of joint measurement outputs relative to given inputs. 
The correlation is classical if it can be reproduced by a model with shared classical randomness in place of quantum states.
The simplest (and most well-understood) quantum networks are multipartite Bell scenarios, where several parties share a single common state. The set of possible correlations is convex in this setup, and its geometry is studied by linear Bell inequalities (functionals that are nonnegative on the set of correlations).
While verifying validity of a Bell inequality on the set of classical correlations is straightforward (because of convexity, one can restrict to checking finitely many deterministic models), to establish Bell inequalities on the set of quantum correlations one typically utilizes the NPA hierarchy \cite{npa08} for noncommutative polynomial optimization.

However, the set of correlations for a network that is not a multipartite Bell scenario is in general not convex (not even if restricted to classical correlations). For this reason, there is no hope of describing it using linear Bell inequalities; a reasonable next step is to study the set of correlations using nonlinear Bell inequalities (yet there is no guarantee that these completely distinguish between different networks; to resolve this, the inflation technique was recently introduced \cite{inflation}). 
Nonlinear Bell inequalities are given polynomial expressions in conditional expectations arising from the model. When restricted to classical correlations, these are moment polynomials in probability measures that govern shared randomness in the network. 
When all quantum correlations are considered, these are state polynomials whose noncommuting variables correspond to the projective measurements, and the formal state corresponds to the quantum state of the model. 

Let us add more details about the above outline for the bilocal scenario, which is the simplest network with a nonconvex set of correlations. 
The network consists of three non-communicating parties and an independent pair of states, where the middle party shares a state with each of the other parties. This scenario is represented by three Hilbert spaces $\cH_A,\cH_{B'}\otimes\cH_{B''},\cH_C$, ensembles of projections $A_i\in\cB(\cH_A), B_j\in\cB(\cH_{B'}\otimes \cH_{B''}), C_k\in\cB(\cH_C)$, and two states $\rho_{AB}$ on $\cH_A\otimes \cH_{B'}$ and $\rho_{BC}$ on $\cH_{B''}\otimes \cH_C$. The correlation of this model is given by
\begin{equation}\label{e:biloc}
\tr\Big(
\big(\rho_{AB}\otimes \rho_{BC}\big)
\big(
A_i\otimes B_j\otimes C_K
\big)
\Big),
\end{equation}
where with a slight abuse of notation the tensors are arranged in a compatible fashion.
The idea of modeling \eqref{e:biloc} with commutative symbols representing states on operators, and mimicking the tensor product structure with factorization relations on these symbols, originated in \cite{pozas2019bounding} and \cite{mor24}, where \emph{scalar extension} and \emph{factorization} hierarchies are designed to recognize correlations of the bilocal network. 
In the language of this survey, polynomial Bell inequalities for classical correlations of the bilocal network are then given by moment polynomials $p=p(\sig(a_ib_jc_k)\colon i,j,k)$ that are nonnegative subject to relations
\begin{alignat*}{2}
&a_i^2=a_i,\ b_j^2=b_j,\ c_j^2=c_k, &&\text{\small(projections)}\\
&a_i,b_j,c_k \text{ commute,} &&\text{\small(classical \& no communication)}\\
&\sig(ac)=\sig(a)\sig(c) \text{ for all products }a,c \text{ of }a_i,c_k. \quad &&\text{\small(bilocality)}
\end{alignat*}
On the other hand, examples of polynomial Bell inequalities for quantum correlations are given by state polynomials $p=p(\sig(a_ib_jc_k)\colon i,j,k)$ that are nonnegative subject to relations
\begin{alignat*}{2}
&a_i^2=a_i,\ b_j^2=b_j,\ c_j^2=c_k,
&&\text{\small(projections)}\\
&a_ib_j=b_ja_i,\ b_jc_k=c_kb_j,\ c_ka_i=a_ic_k, &&\text{\small(no communication)}\\
&\sig(ac)=\sig(a)\sig(c) \text{ for all products }a,c \text{ of }a_i,c_k.
 \quad &&\text{\small(bilocality)}
\end{alignat*}
By the refutation of Connes' embedding problem \cite{connes}, there exist (linear) Bell inequalities for quantum models as defined above that do not correspond to such nonnegative state polynomials. However, the latter correspond to polynomial Bell inequalities for correlations of suitably generalized quantum models (see \cite{ligthart22} for a comparison of various considered models).
By translating polynomial Bell inequalities in the language of moment and state polynomials, one can then use the convergent SDP hierarchy from Section \ref{sec:conv} to establish their validity.

There are (at least) two other SDP hierarchies that are tailored to Bell inequalities arising from quantum networks.
The so-called polarization and inflation hierarchies \cite{ligthart21,ligthart22,mor24} are based on techniques arising from quantum information theory.
A fundamental tool for proving their convergence is a quantum version of de Finetti theorem in quantum probability \cite{raggio89,caves01,brandao12,ligthart21}.
Let us very roughly compare the SOS hierarchy of Section \ref{sec:conv} with the polarization/inflation hierarchy. To handle operator variables, all hierarchies rely on the GNS construction. The differences arise from the treatment of pure state constraints.
Whereas the SOS hierarchy views state symbols as commuting indeterminates, the polarization and inflation hierarchies explore the symmetry in pure state constraints, and use polarization to replace them by linear constraints with permutation invariance. The relation between these two approaches is  analogous to the identification of the quotient of a tensor product space modulo commutation relations on one hand with the subspace of symmetric tensors on the other hand.
While the SOS hierarchy then relies on the Kadison-Dubois representation theorem for positivity of commutative polynomials, the convergence of the polarization/inflation hierarchy is derived from the quantum de Finetti theorem.

For the bilocal network, as well as for star networks, the relationship between various quantum models of correlations is now well-understood \cite{ligthart22,mor24}. However, distinguishability for networks beyond this family remains an open problem.

\subsection{Quantum uncertainty relations}
\label{sec:uncert}

Uncertainty among noncommuting observables is a fundamental feature of quantum mechanics. The most well-known is perhaps the Heisenberg uncertainty relation, stating that the standard deviations of the position and the momentum of a particle cannot both be known with arbitrary precision,
\begin{equation*}
\Delta x \cdot \Delta p  \geq \frac{\hbar}{2\pi}  .
\end{equation*}
Here $\Delta^2 A = \langle A^2\rangle - \langle A\rangle^2$
is the variance of an observable $A$,
and $\langle A\rangle = \tr(\varrho A)$ is its expectation value
with respect to some state $\varrho$.
Similar relations hold for observables in a finite Hilbert space. The question then is: how can one derive uncertainty relations - also among more than two observables - in a systematic manner?

An interesting scenario is that of a set of unitary and hermitian observables
$\{A_i\}_{i=1}^n$, $A_i^\dag = A_i$, $A_i A_i^\dag=\one$,
that mutually either commute or anti-commute,
$A_i A_j = (-1)^{\chi_{ij}} A_j A_i$ with $\chi_{ij} \in \{0,1\}$.
Define the quantity
\begin{equation*}
 \beta = \operatorname{sup}_{\varrho, A_i, \mathcal{H}} \quad \sum_{i=1}^n  \langle A_i\rangle^2,
\end{equation*}
where the optimization is over
all states $\varrho$ and observables $A_i$
on Hilbert spaces $\mathcal{H}$ that support the defining relations of the $A_i$.
Then a tight additive uncertainty relation is as follows:
\begin{equation*}
 \sum_{i=1}^n \Delta A_i^2 \geq n - \beta.
\end{equation*}
Our aim is now to provide efficiently computable upper bounds on $\beta$.

Let $\vartheta(G)$ be the Lov\'asz theta number of a graph~$G$~\cite{GALLI2017159}~\footnote{
Several equivalent formulations are known,
see \cite{Knuth1994, porumbel2022demystifyingcharacterizationsdpmatrices}.},
\begin{equation*}
 \vartheta(G) = \Big\{ \max\, \tr(X)
 \,\big|\, \begin{pmatrix}
            1 &x^T \\
			x &X
           \end{pmatrix}  \succeq 0,\,
           x_i = X_{ii} \,\forall i,\, X_{ij} = 0 \,\text{ if }\, i \sim j \Big\} .
\end{equation*}
It is known that $\vartheta$ upper bounds
the independence number $\alpha$ of a graph,
providing an SDP approximation
to a quantity that is NP-hard to compute.
The relations among the observables can be encoded into an \emph{anti-commutativity graph} $G$ defined by
$i \sim j$ if $A_i A_j = - A_j A_i$ and $i\not \sim j$ otherwise.
In \cite{de2022uncertainty} it is shown that $\beta < \vartheta(G)$,
implying the additive uncertainty relation~\footnote{
A related bound
$\langle \sum_{i=1}^n a_i A_i \rangle^2 \leq \vartheta(G)$ for all $|\!|a|\!|_2 = 1$ in the context of optimizing fermionic Hamiltonians is given in \cite{hastings2022optimizing}.
}
\begin{equation*}
    \sum_{i=1}^n \Delta^2 A_i \geq n - \vartheta(G).
\end{equation*}

Let us show how this bound naturally fits into a complete SDP hierarchy
originating from the state polynomial optimization framework~\cite{Mor_n_2024}.
Given a unital linear functional $L$ acting on state polynomials, consider a state Hankel matrix $\Gamma(L)$ indexed by
$\av{A_0^\dag} A_0, \av{A_1^\dag} A_1,
\dots, \av{A_n^\dag} A_n$, where we set $A_0 = \one$.
Its entries read
\begin{align*}\label{eq:relax}
	\Gamma_{ij}(L)= L( \av{A_i^\dag} \av{A_j} \av{A_i A_j^\dag}).
\end{align*}
Note that $\Gamma(L)_{00} = 1$, $\Gamma(L)_{i0} = \Gamma(L)_{ii}$ $\forall i$, and
$\Gamma(L) \succeq 0$.
As a consequence,
\begin{equation*}
 \max_\Gamma \quad \sum_{i=1}^n \Gamma_{ii} \quad \text{s. t.} \quad
 \Gamma_{00}=1, \quad
 \Gamma_{i0} = \Gamma_{ii} \ \forall i, \quad
 \Gamma \succeq 0
\end{equation*}
bounds $\beta$ from above.
If $\Gamma$ achieves the maximum in the above optimization problem, 
then so does $(\Gamma + \Gamma^T)/2$.
We thus can additionally impose that $\Gamma_{ij} = 0$ if $A_i A_j = - A_j A_i$
(or if $i\sim j$ in terms of $G$).
The resulting optimization problem is then nothing else than the Lov\'asz theta number, 
with the corresponding bound
$\sum_{i=1}^n \langle A_i \rangle^2 \leq \vartheta(G)$ given in \cite{de2022uncertainty}.
The idea of ~\cite{Mor_n_2024} is that this efficiently computable bound on $\vartheta$
can be strengthened
by considering a hierarchy of state Hankel matrices: indexing by products in $\av{A_i^\dag} A_i, i=0,\dots,n,$ with degrees at most $2d$ leads to the bound $\beta \leq \vartheta_d$ (see Table~\ref{tab:uncert7}). 
More generally, there is a complete SDP hierarchy based on state polynomial optimization to obtain quantum uncertainty relations.

The odd-hole inequalities strengthen the SDP formulation of the Lov\'asz number further:
let $H \subseteq G$ be a subset of vertices inducing an odd cycle $C_{2k+1}$.
Then~\footnote{Note that the original argument in \cite{Mor_n_2024} is incomplete.}
\begin{equation*}
  \sum_{i \in H} \langle A_i \rangle^2 \leq
  \left\lfloor \frac{|H|}{2} \right\rfloor.
 \end{equation*}
This can be seen by \cite[Lemma 39]{PRXQuantum.5.020318} 
which shows that $\beta(C_{2k+3}) - \beta(C_{2k+1}) \leq 1$, 
where $C_\ell$ is the cycle graph with $\ell$ vertices.
Noting that $\beta({C_3}) = 1$ yields the claim.

Table~\ref{tab:uncert7} shows bounds on $\beta$
for selected graphs of seven vertices.
There are $43$ non-isomorphic graphs with seven vertices
for which $\vartheta < \beta$.
Combining the Lov\'asz number with the odd-hole inequalities,
one obtains tight bounds on $18$ of these.
For all but $10$ graphs one has $\beta = \vartheta_2$;
for all but $7$ graphs one has  $\beta = \vartheta_3$.
This shows the numerical effectiveness of these relaxations.

 \begin{table}[tbp]
\includegraphics[width=1\textwidth]{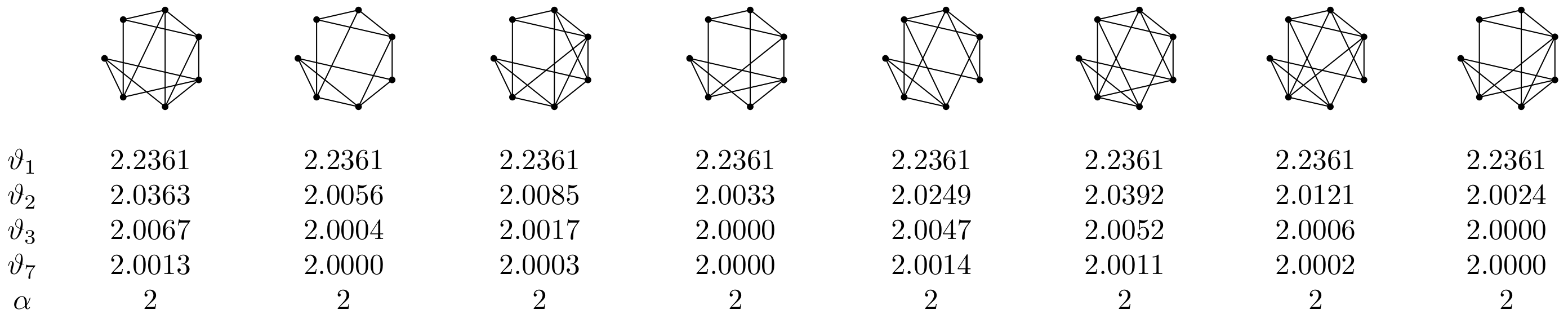}
\caption{
Upper bounds on $\beta$ for selected graphs of seven vertices. The lower bound is given by the independence number $\alpha$. \label{tab:uncert7}}
\end{table}

Finally, note the relation with the state polynomial optimization hierarchy:
it is clear that $\Gamma$ is a 
principal submatrix of the state Hankel matrix
at the second level of the state polynomial hierarchy
that is indexed by all state monomials of degree at most two.
Indexing the Hankel matrix with all state monomials of degree at most  $d$
as explained above, 
one obtains a sequence of semidefinite upper bounds that converges to $\beta$ as $d \to \infty$.
Such hierarchy can easily be adapted to operators with other defining relations,
for example to Heisenberg-Weyl operators.

\subsection{Bounds on quantum codes}

Quantum error-correcting codes
protect quantum information from noise and are thought to play an indispensable role in quantum computing devices.
The idea is to encode a state $\varrho \in \cB(\C^K)$ into a subspace $\mathcal{Q}$ of an $n$-qubit system,
so that for a given noisy channel $\mathcal{N}(\varrho) = \sum_{\mu} E_\mu \varrho E_\mu^\dag$ with $\sum_\umu E_\mu^\dag E_\mu = \mathbb{I}$, 
there exists a decoding map $\mathcal{D}$
satisfying
$\mathcal{D} \circ \mathcal{N} (\varrho)
= \varrho$ for all $\varrho$ on $\mathcal{Q}$.

Consider the case where each Kraus operator $E_\mu$ of the noisy channel acts  only on few qubits non-trivially. 
For this, let $\mathcal{E}_n$ be the $n$-qubit Pauli basis, 
formed by $n$-fold tensor products of the identity matrix $\one$ and the three Pauli matrices
\begin{align*}
    X &= \begin{pmatrix}
          0 & \phantom{-}1 \\ 1 & \phantom{-}0
        \end{pmatrix},
    &Y &= \begin{pmatrix}
          0 & -i \\ i & \phantom{-}0
        \end{pmatrix},
    &Z &= \begin{pmatrix}
          1 & \phantom{-}0 \\ 0 & -1
        \end{pmatrix}.
\end{align*}
Denote by $\operatorname{wt}(E)$ the number of subsystems (coordinates) an operator $E \in \mathcal{E}_n$ acts on non-trivially. 
For example, $\operatorname{wt}(X \otimes \one \otimes Z) = 2$.

Now a quantum code can be entirely characterized by the subspace $\mathcal{Q}$ into which the quantum information is encoded.
The Knill-Laflamme conditions provide necessary and sufficient conditions for a $\mathcal{Q}$ to be a quantum error correcting code~\cite{PhysRevA.55.900}:
a subspace defined by a hermitian projection $\Pi \in  \cB((\C^2)^{\otimes n})$ of rank $K$ is a quantum code of distance $\delta$,
if and only if
\begin{equation*}\label{eq:KLF}
    \Pi E_\mu E_\nu \Pi = c_{\mu\nu} \Pi,
\end{equation*}
for all $E_{\mu}, E_{\nu}\in \mathcal{E}_n$
with $\operatorname{wt}(E_\mu^\dag E_\nu) < \delta$.
A code with distance $\delta$ then is able to correct
all errors acting upon at most
$\lfloor \frac{n-1}{2} \rfloor$ subsystems (and linear combinations thereof).
A code is \emph{pure} if $c_{\mu\nu} = 0$ for all $E_{\mu}, E_{\nu}\in \mathcal{E}_n$ with
$0<\wt(E_\mu E_\nu) <\delta$.
A code is that is pure and has size $K=1$
is termed \emph{self-dual}.

A fundamental problem in quantum coding theory is to determine for what parameters
$(\!(n,K,\delta)\!)$ a quantum code exists.
A recent work shows that this can be formulated as a state polynomial optimization problem~\cite{munne2024sdpboundsquantumcodes}.
The key insight is that,
given a Hermitian projection $\Pi$ on $(\C^2)^{\ot n}$,
the Knill-Laflamme conditions~\eqref{eq:KLF} can be formulated as the condition
\begin{equation}\label{eq:KLF_enum}
    K B_j =
    A_j
    \qquad \text{for all}\quad 0<j<\delta,
\end{equation}
where the weight enumerators $A_j,B_j$ are given by
\begin{align*}
    B_j &= \sum_{E \in \mathcal{E}_n, \wt(E) = j} \tr(E \Pi E^\dag \Pi) , \\
    A_j &= \sum_{E \in \mathcal{E}_n, \wt(E) = j} \tr(E \Pi) \tr(E^\dag \Pi).
\end{align*}

Clearly, the RHS of \eqref{eq:KLF_enum} is a state polynomial in $\varrho = \Pi/K$
while the LHS is not.
However, the quantum MacWilliams identity allows to recover the $B_j$ from the~$A_j$ through a polynomial transform~\cite{
PhysRevLett.78.1600}:
\begin{equation*}
B(x,y) = A\left(\frac{x + 3y}{2}, \frac{x-y}{2}\right),
\end{equation*}
with
$A(x,y) = \sum_{j=0}^n A_j x^{n-j} y^j $ and
$B(x,y) = \sum_{j=0}^n B_j x^{n-j} y^j $.

Thus a strategy to formulate a state polynomial optimization hierarchy for the existence of quantum codes
with given parameters $
(\!(n,K,\delta)\!)$ emerges:
optimize over state Hankel matrices indexed
by state words in elements of the Pauli basis~$\mathcal{E}_n$.
At level $d$,
a linear combination of Hankel matrix entries
yields approximate enumerators $A^{(d)}$
and, with the help of the quantum MacWilliams transform,
also the approximate enumerators $B^{(d)}$.
In the limit $d\to\infty$ both $A_j$ and $B_j$ converge to enumerators from some state $\varrho$.
As a consequence
one can impose the Knill-Laflamme conditions~\eqref{eq:KLF_enum} directly onto the hierarchy.

It remains to see that such hierarchy can be made complete:
one needs to enforce that the state $\varrho$ in the state polynomial is proportional to a projection $\Pi$ of rank $K$ (so to correspond to a subspace $\mathcal{Q}$);
and furthermore that $\Pi$ acts on an $n$-qubit system.
The constraint that $\varrho = \Pi/K$ with $\Pi$ a projection can be imposed by the fact that the expression
$\tr(\varrho^\ell) = \tr( \eta(1, \dots, \ell) \varrho^{\otimes \ell}) = 1/K^{\ell-1}$
for all $\ell \in \N$ can be expanded in terms of state polynomials with letters in $\mathcal{E}_n$.
Here $(1,\dots, \ell) \in S_\ell$ is a cyclic permutation, and $\eta$ acts on $((\C^2)^{\ot n})^{\ot \ell}$ by permuting its $\ell$ tensor factors. 
The constraint that $\Pi$ acts on an $n$-qubit system can be recovered through a characterization
of quasi-Clifford algebras by Gastineau-Hills~\cite{gastineau-hills_1982}: 
every algebra with $m$ generators satisfying
$\alpha_i \alpha_j = (-1)^{\xi_{ij}} \alpha_j \alpha_i$,
$\xi_{ij} \in \{0,1\}$
and $\alpha_i^2=1$ is isomorphic to a direct sum of
$s$-qubit systems with $m=r+2s$ where $r$ is the number of summands. Imposing the relations of the Pauli basis ($XY = iZ$, etc) onto the state polynomial hierarchy then recovers the condition that $\Pi$ acts on the $n$-qubit space~$(\C^2)^{\ot n}$.

The hierarchy indexed by the Pauli basis $\mathcal{E}_n$
is too large to be practical at the first level. It can nevertheless be made useful:
following the strategy by~\cite{GIJSWIJT20061719}, 
an averaging over distance- and identity-preserving automorphisms 
of the graph formed by the Pauli basis
and a subsequent symmetry-reduction using the quaternary Terwilliger algebra excludes quantum codes with parameters 
$(\!(7,1,4)\!)$~\footnote{While the non-existence of the $(\!(7,1,4)\!)$ code was previously achieved through a analytical method~\cite{PhysRevLett.118.200502}, this approach provides a numerical infeasibility certificate.}, $(\!(8,9,3)\!)_2$, and $(\!(10,5,4)\!)_2$~\cite{munne2024sdpboundsquantumcodes}.

\subsubsection*{Quantum Lov\'asz and Delsarte bounds}
From the hierarchy sketched in the previous section, quantum Lov\'asz and Delsarte bounds on the existence of quantum codes can be derived.
For this we consider the same construction as done in Section~\ref{sec:uncert}, but for a feasibility instead of maximization problem.
Given a unital linear functional $L$ acting on state polynomials, consider a state Hankel matrix $\Gamma(L)$ indexed by
$\av{E_i^\dag} E_i$ with $E_i \in \mathcal{E}_n$,  where we set $E_0 = \one$.
Its entries read
\begin{align*}\label{eq:relax}
	\Gamma_{ij}(L)= L( \av{A_i^\dag} \av{A_j} \av{A_i A_j^\dag}).
\end{align*}
Note that $\Gamma(L)_{00} = 1$, $\Gamma(L)_{i0} = \Gamma(L)_{ii}$ $\forall i$, and
$\Gamma(L) \succeq 0$.
If $\Gamma(L)$ has this structure, 
then so does $(\Gamma(L) + \Gamma(L)^T)/2$.
We thus can additionally impose that $\Gamma(L)_{ij} = 0$ if $E_i E_j = - E_j E_i$.

The interesting aspect of this construction is the fact that the set of matrices $\Gamma$ with this structure corresponds to the {\em Lov\'asz theta body} 
for a graph $G$ with vertex set $\mathcal{E}_n \setminus \one$, 
where $i \sim j$ if $E_i E_j = - E_j E_i$~\footnote{Compare this with the Lov\'asz theta number of Section~\ref{sec:uncert}, which is the maximization of the sum of $\mathrm{TH}(G)$.}:
\begin{equation*}
 \mathrm{TH}(G) = \Big\{ \operatorname{diag}(X)
 \,\big|\, \begin{pmatrix}
            1 &x^T \\
			x &X
           \end{pmatrix}  \succeq 0,\,
           x_i = X_{ii} \,\forall i,\, X_{ij} = 0 \text{ if } i \sim j \Big\} .
\end{equation*}

In the case of self-dual quantum codes ($K=1$ and pure),
one additionally needs to impose the condition
$\Pi E_i^\dag E_j \Pi = 0$ 
if $0< \wt(E_i^\dag E_j) <\delta$.
Thus for a self-dual code of block-length $n$ and distance $\delta$
define its confusability graph with vertex set $\mathcal{E}_n \setminus \one$ as:
\begin{align*}
   & i\sim j \qquad \text{if} \quad
   E_i E_j = -E_j E_i \quad \text{or} \quad
   0<\wt(E_i^\dag E_j) <\delta ,  \\
   & i\sim i \qquad\, \text{if}\quad 0<\wt(E_i) < \delta.
\end{align*}
Note the appearance of loops in the graph,
arising from the case when $E_j = \one$.
We now impose normalization of the state ($K=1$), which leads to the condition 
$\sum_i \Gamma_{ii} = 2^n$.
As a consequence, the existence of a self-dual quantum code with parameters
$(\!(n,1,\delta)\!)$
can be bounded by a feasibility problem over the theta body $\mathrm{TH}(G)$.
A further relaxation of this feasibility program is the necessary condition $2^n \leq \vartheta(G) +1$ for a code to exist,
where $\vartheta$ is the Lov\'asz theta number introduced in Section~\ref{sec:uncert}.
This bound already excludes the existence of a $(\!(4,1,3)\!)$ quantum code~\footnote{The non-existence was previously known through the so-called shadow inequalities~\cite{PhysRevA.69.052330}.}.

Similarly to the classical case, one obtains a quantum Delsarte bound by averaging over distance-preserving 
anti-commutativity graph 
automorphisms of a minor variation of this quantum Lov\'asz bound. It states that if a quantum code with parameters $(\!(n,1,\delta)\!)$ exists, then the following set is non-empty:
\begin{align*}
	\Big\{
 a_j
 \,\Big|\,
 &a_0 = 1,
a_j =0 \text{\, for \,} 1<j<\delta ,\,
 \sum^n_{k=0} a_j = 2^n,\\
&\quad \quad a_j \geq 0 \text{\, and \,}
\sum^n_{i=0} K_j(i)a_i \geq 0
 \text{\, for \,}0\leq j\leq n
\Big\},
\end{align*}
where $K_j$ denotes the quarternary Krawtchouk polynomial defined by 
$K_j(i) := \sum_{\alpha=0}^i (-1)^\alpha 3^{j-\alpha}
		\binom{i}{\alpha}  \binom{n-i}{j-\alpha}$~\cite{MacWilliams1981}.

\begin{acknowledgement}
FH was supported by the Agence National de Recherche grant ANR-23-CPJ1-0012-01 and the Region Nouvelle-Aquitaine grant 34982420,
and the work was written while FH was affiliated with the Bordeaux Computer Science Laboratory (LaBRI), University of Bordeaux, France.
VM was supported by the HORIZON–MSCA-2023-DN-JD of the European Commission under the Grant Agreement No 101120296 and the project COMPUTE, funded within the QuantERA II Programme that has received funding from the EU's H2020 research and innovation programme under the GA No 101017733.
JV was supported by the National Science Foundation grant DMS-2348720.
\end{acknowledgement}
\ethics{Competing Interests}{
The authors have no conflicts of interest to declare that are relevant to the content of this chapter.}


\end{document}